\documentclass{article}
\usepackage[all]{xy}
\usepackage{amsthm}
\usepackage{amsfonts}
\usepackage{amsmath}
\usepackage{amssymb}
\usepackage{mathrsfs}
\usepackage{graphicx}
\usepackage{subcaption}
\usepackage{url}
\usepackage{nomencl} 
\usepackage{todonotes} 

\theoremstyle{definition}
\newtheorem{definition}{Definition}
\newtheorem{proposition}{Proposition}
\newtheorem{corollary}{Corollary}

\newtheorem{theorem}{Theorem}
\newcommand{\R}{\mathbb{R}}
\newcommand{\C}{\mathbb{C}}

\newcommand{\Q}{\mathbb{Q}}
\newcommand{\T}{\mathbb{T}}
\newcommand{\Z}{\mathbb{Z}}
\newcommand{\X}{\mathbb{X}}
\newcommand{\N}{\mathbb{N}}
\newcommand{\ii}{\mathrm{i}}
\newcommand{\dd}{\,\mathrm{d}}

\DeclareMathOperator{\rank}{rank}
\DeclareMathOperator{\tr}{tr}

\makeatletter
\let\@fnsymbol\@alph
\makeatother

\title{Constraints on pure point diffraction on aperiodic point patterns of finite local complexity}
\author{Pavel Kalugin\footnote{Laboratoire de Physique des Solides, CNRS, Universit\'e Paris-Sud, Universit\'e Paris-Saclay, F-91405 Orsay, France. E-mail: \texttt{kalugin@lps.u-psud.fr}}
	\and Andr\'e Katz\footnote{Directeur de recherche honoraire, CNRS, France}}

\date{}
\begin{document}
\maketitle
\begin{abstract}
    It is shown that the partial amplitudes of the pure point part of the diffraction spectrum of an aperiodic Delone point pattern of finite local complexity are linked by a set of linear constraints. These relations can be explicitly derived from the geometry of the prototile space of the underlying tiling.
\end{abstract}
\section{Introduction}
Delone sets of finite local complexity (FLC) are traditional objects of study in the diffraction theory of aperiodic solids. In this paper, instead of considering individual Delone sets which just {\em happen to} have the FLC property, we rather deal with families of such sets, having common {\em allowed} local configurations, and study the constraints on the pure point diffraction stemming from these local rules. The main motivation of this approach comes from the problem of the structural analysis in quasicrystals and is explained in details in \cite{kalugin2019robust}. However, in contrast with \cite{kalugin2019robust}, this paper is not limited to the case of the quasiperiodic long-range order. Neither the restrictions on local environments need to  have the strength of the {\em matching rules}, that is fix by themselves the long-range order of the structure. In particular, the results are also applicable to the pure point part of diffraction of random tiling models.
\par
The results of this paper are formulated in terms of the partial diffraction amplitudes of the following distribution associated with an FLC Delone multiset $(\Lambda_1,\dots,\Lambda_m)$ in the $d\mbox{-dimensional}$ Euclidean space $E$:
\begin{equation}
\label{eq:density}
\varrho=\sum_{p=1}^m \left(\sum_{y \in \Lambda_p} w_p \delta_y\right),
\end{equation}
where the index $p$ enumerates atomic sites distinguished by their local environment and the weights $w_p \in \C$ represent their diffractive power. The partial amplitudes will be properly defined in Proposition \ref{def_a_k}, but it is reasonable to think of them informally as of the following quantities:
\begin{equation}
\label{eq:naive}
a_{k,p} = \lim_{r \to \infty} \frac{1}{\mathrm{vol}(D_r)} 
\sum_{y \in \Lambda_p \cap D_r} w_p \exp(-2\pi \ii k \cdot y),
\end{equation}
where $k \in E^*$ is the wave vector corresponding to a Bragg peak and $D_r \subset E$ is the ball of radius $r$ centered at the origin. The hypothesis that the pure-point part of the diffraction measure $\eta$ of the distribution (\ref{eq:density}) is related to the amplitudes (\ref{eq:naive}) by the formula
\begin{equation}
\label{eq:contib_naive}
\eta(\{k\}) = \left| \sum_{p=1}^m w_p a_{k,p}\right|^2
\end{equation}
is commonly known as Bombieri-Taylor conjecture \cite{bombieri1986distributions}.
\par
We shall consider the sets $\Lambda_p$ in (\ref{eq:density}) as decorations of a tiling in $E$, and treat the corresponding prototile space as the embodiment of the local rules. In this setting, the informal expression (\ref{eq:naive}) offers a glimpse of the nature of the constraints on the partial diffraction amplitudes $a_{k,p}$, considered as functions on the prototile space. Indeed, as moving the $p\mbox{-th}$ decorating point within a prototile shifts the entire set $\Lambda_p$ and thus modifies (2) by a phase factor, all these functions belong to a finite-dimensional linear space. Furthermore, since the decoration at a tile boundary can be assigned to either of the neighboring tiles, the partial amplitudes are subject to additional linear constraints akin to the Kirchhoff's current conservation law. 
\par
The paper is organized as follows. Section \ref{sec:model} sets up the model of Delone multisets of finite local complexity in terms of flat-branched semi-simplicial complexes and isometric windings. Section \ref{sec:partial} is devoted to a formal definition of the partial diffraction amplitudes in terms of dynamical systems. The main results of the paper (Theorem \ref{one_peak} and Corollary \ref{cor:smooth}) are exposed in Sections \ref{sec:single} and \ref{sec:subspace}. Finally, Section \ref{sec:examples} illustrates these results by several examples of aperiodic point patterns.
\section{The model}
\label{sec:model}
A convenient way to impose local rules on the multiset $(\Lambda_1,\dots,\Lambda_m)$ in (\ref{eq:density}) consists in treating it as a decoration of a tiling of $E$. The range of the local rules is not limited by the tiles sizes, since the environment on a longer range can always be specified by discrete labels attached to the tiles. Without loss of generality, we can limit the consideration to the case of simplicial tilings. Note, that contrary to a common usage, we treat such a tiling as a {\em partition} of $E$ by a countable set of interiors of affine simplices, in particular we accept simplicial tiles of any dimension from $0$ to $d$ (by a common convention, a vertex is its own interior). A class of tiles identically labeled and having the same shape (up to a translation) is called a prototile (this term may also refer to an arbitrary representative of the class). One can introduce an order on the vertices of each prototile by choosing a direction in $E$ not orthogonal to any prototile edge, and ordering vertices by the values of their projections on this direction. Then the $i\mbox{-th}$ face of a prototile is defined as its face not containing the $i\mbox{-th}$ vertex. Let $B_n$ stand for the set of prototiles of dimension $n$. The matching constraints on the tiles are realized by defining the maps $\delta_{n,i}: B_n \to B_{n-1}$ assigning to each $s \in B_n$ the prototile $\delta_{n,i}s \in B_{n-1}$ glued to its $i\mbox{-th}$ face. Since the ordering of vertices remains consistent across dimensions, the maps $\delta_{n,i}$ satisfy the simplicial identity  (\ref{face_identity}). We shall denote the resulting semi-simplicial set (see Appendix A) by $B$. The corresponding geometric realization $|B|$ thus represents the prototile space of the tiling.
\par
In addition to the combinatorial data encoded by $B$, the local rules must also specify the geometry of prototiles, which can be recovered from the directions and orientations of edges. The latter are constrained by the condition that the edges of each $2\mbox{-dimensional}$ face of a prototile form a triangle. This constraint can be conveniently formulated in terms of the chain complex $C_\bullet(B, \Z)$ (see Appendix A), leading to the following definition \cite{kalugin2019robust}:
\begin{definition} \label{FBS}
	A $d\mbox{-dimensional}$ flat-branched semi-simplicial complex (FBS-complex) is a triple $(B, E, \rho)$, where $B$ is a finite semi-simplicial set of dimension $d$, $E$ is a $d\mbox{-dimensional}$ real Euclidean vector space and $\rho$ is a homomorphism $C_1(B, \Z) \to E$ satisfying the following conditions:
	\begin{itemize}
		\item The homomorphism $\rho$ vanishes on boundaries. 
		\item For any $s \in B_d$, the vectors $\rho(e_{d,1}(s)), \dots, \rho(e_{d,d}(s))$ are linearly independent.  
	\end{itemize} 
\end{definition}
For any $s \in B_n$ let $\sigma_s \subset E$ be given by the formula
\begin{equation}
\label{eq:sigma}
\sigma_s := \left\{\sum_{i=1}^{n} c_i \rho(e_{n,i}(s)) \,\,\middle|\, 
c_i \in \R^+ \text{ and }
\sum_{i=1}^{n} c_i<1 \right\} 
\end{equation}
The second condition of Definition \ref{FBS} guarantees that $\sigma_s$ is a non-degenerate $n\mbox{-dimensional}$ affine simplex. Identification of barycentric coordinates within $|s| \subset |B|$ and $\sigma_s$ defines a homeomorphism
\begin{equation}
\label{alpha_s}
\alpha_s: |s| \to \sigma_s.
\end{equation}
The tilings obeying the matching rules encoded by an FBS-complex are in one-to-one correspondence with the {\em isometric windings} of the latter \cite{kalugin2019robust}:
\begin{definition}
	A continuous map $f:E \to |B|$ is called isometric winding of an FBS-complex $(B, E, \rho)$ if
	\begin{itemize}
		\item For each $s \in B$ the restriction $f\big|_{f^{-1}(|s|)}$ is a covering map of $|s|$.
		\item The composition $\alpha_s \circ f\big|_{\sigma}$ restricted to any connected component $\sigma$ of $f^{-1}(|s|)$ is a translation of $\sigma$ by some vector of $E$.
	\end{itemize}
\end{definition}
The tiling $\mathcal{T}_f$ corresponding to an isometric winding $f:E \to |B|$ is given by the partition of $E$ by connected components of $f^{-1}(|s|)$ for all $s \in B$. Fixing a point $x \in |B|$ defines a decoration of this tiling by the set $f^{-1}(x)$. We shall use this construction to define the sets $\Lambda_p$ in (\ref{eq:density}) by fixing $m$ points $x_p \in |B|$ and setting $\Lambda_p := f^{-1}(x_p)$. This leads to the following formula for the distribution of the diffracting quantity:
\begin{equation}
\label{varrho}
\varrho_f:=\sum_{p=1}^m \left(\sum_{y \in f^{-1}(x_p)} w_p \delta_y\right),
\end{equation}
The formula (\ref{varrho}) describes the model of the distribution of matter which will be used throughout the rest of the paper.
\section{Partial diffraction amplitudes}
\label{sec:partial}
The main flaw of the informal expression (\ref{eq:naive}) is the presence of the limit operation, which makes proving any result about $a_{k,p}$ a daunting task. The crucial step towards a closed expression for the diffraction amplitudes consists in using the theory of dynamical systems. The key element in this scheme is the {\em hull} of the diffracting distribution -- a compact topological space representing arbitrarily large finite patches of an infinite system ``all at once''. The idea to use the hull in studying diffraction has been proposed by Dworkin in \cite{dworkin1993spectral}, and since then has lead to a significant progress in understanding the relation between the diffraction and dynamical spectra (\cite{deng2008dworkin,lenz2017stationary,baake2016spectral}). In this section, we follow mostly the ideas of \cite{deng2008dworkin} and \cite{baake2004dynamical}, but with the twist of using a matrix-valued diffraction measure.
\par
In the theory of aperiodic order, hulls are built by adding limiting points to the orbit of an aperiodic structure under the action of translations. The actual construction may come in different guises, either as a closure of the orbit in an appropriate topological vector space (for the hulls of almost periodic functions or measures), or as a completion of the orbit in appropriate metric (in the case of point sets or tilings \cite{sadun2008topology}). Since we shall be mostly interested in the dependence of the pure point component of the diffraction measure on the parameters $x_p$ and $w_p$ in (\ref{varrho}) (cf. the formula (\ref{eq:peak}) below), the construction of the hull should depend only on the properties of the isometric winding $f$ in (\ref{varrho}). The translation of the diffracting distribution by a vector $t \in E$ corresponds to the following transformation of $f$: 
\begin{equation}
\label{T_t_Hull}
T_t f: y \mapsto f(y-t).
\end{equation}
For a given isometric winding $f_0: E \to |B|$ we shall define its hull $\X(f_0)$ as the closure of its orbit in the compact-open topology of $C(E, |B|)$:
\begin{equation}
\label{eq:def_hull}
\X(f_0):=\overline{\left\{ T_t f_0 \mid t \in E \right\}}
\end{equation}
Let us show that this construction is equivalent to that of the tiling space (a continuous hull) \cite{sadun2008topology} of $\mathcal{T}_{f_0}$. Consider the patch of a shifted tiling $\mathcal{T}_{f_0}+t$ contained within a compact window $K \subset E$. Since $\mathcal{T}_{f_0}$ has finite local complexity, there exists a compact $K' \subset E$ such that all these patches are generated by shifts $t \in K'$. Therefore
\begin{equation}
\label{eq:compact}
\left\{
(T_t f_0)\big|_{K} \mid t \in E 
\right\} 
=
\left\{
(T_t f_0)\big|_{K} \mid t \in K' 
\right\} 
\end{equation}
Since the right-hand side of (\ref{eq:compact}) is an image of the compact $K'$, the left-hand side is closed in the compact-open topology of $C(K, |B|)$. Thus, for any $f \in \X(f_0)$ there exists $t \in E$ such that $f\big|_K = (T_t f_0)\big|_K$. Therefore all points of $\X(f_0)$ are isometric windings and thus correspond to tilings, while the convergence in $\X(f_0)$ is equivalent to that in the tiling space of $\mathcal{T}_{f_0}$.
\par
The closed formula for the partial diffraction amplitudes requires one more ingredient -- a translation invariant probability measure $\mu$ on $\X(f_0)$. Such a measure always exists since $\X(f_0)$ is compact, but it may be not unique. Although the results of this section remain valid for any translation invariant measure, it should be emphasized that the well-definedness of the limit in the formula (\ref{eq:naive}) is guaranteed only in the case when there exists only one such measure, that is when the measure-preserving dynamical system $(\X(f_0), E, \mu)$ is uniquely ergodic \cite{lenz2009continuity}. On the other hand, in real physical applications the condition of unique ergodicity is not very stringent since it corresponds to an intuitive notion of macroscopic uniformity of the specimen. Bearing this in mind we shall treat the measure $\mu$ as a background parameter and omit to reference it unless necessary.
\par
Let $\mathcal{S}(E)$ stand for the Schwartz space on $E$. For a given point $x \in |B|$, let $\Gamma_x$ be the linear map $\mathcal{S}(E) \to L^2(\X(f_0), \mu)$ defined by the formula
\begin{equation}
\label{Gamma}
\Gamma_x(\varphi):= \left(
f \mapsto \sum_{y \in {f}^{-1}(x) }\varphi(-y)
\right) \quad\text{ where } \varphi \in \mathcal{S}(E).
\end{equation}
Since $\varphi$ is rapidly decreasing and ${f}^{-1}(x)$ is uniformly discrete, the sum in the above expression converges absolutely, therefore $\Gamma_x$ is continuous. Consider the following sesquilinear functional on $\mathcal{S}(E)$ (we use the convention that Dirac bracket is antilinear in the first argument):
$$
(\varphi_1, \varphi_2) \mapsto \sum_{p,q=1}^m
\overline{w_p} w_q \left\langle 
\Gamma_{x_p}(\varphi_1), \Gamma_{x_q}(\varphi_2)
\right\rangle.
$$ 
This functional is translationally invariant, positive definite and continuous in each argument. Therefore (see \cite[Chapter II.3, Theorem 6]{gel2014generalized} and the discussion afterwards), there exists a positive tempered measure $\eta$ on $E^*$ such that
\begin{equation}
\label{eta}
\sum_{p,q = 1}^m
\overline{w_p} w_q \left\langle 
\Gamma_{x_p}(\varphi_1), \Gamma_{x_q}(\varphi_2)
\right\rangle = \int_{E^*} \overline{\widehat{\varphi_1}}(k) \widehat{\varphi_2}(k) \dd\eta(k)
\end{equation}
We shall take (\ref{eta}) as the definition of the diffraction measure $\eta$ of the distribution (\ref{varrho}) (see Appendix B for the proof that this definition is equivalent to more traditional ones).
\par
The diffraction measure $\eta$ in (\ref{eta}) is sesquilinear in the weights $w_p$:
\begin{equation}
\label{etazeta}
\eta=\sum_{p,q=1}^m \overline{w_p} w_q \zeta_{pq},
\end{equation}
where $\zeta_{pq}$ are complex-valued tempered measures on $E^*$. It is convenient to consider them as elements of a matrix-valued measure on $E^*$ 
$$
\zeta := (\zeta_{pq}) \text{ where } 1\le p,q\le m.
$$
The values of $\zeta$ on bounded Borel sets of $E^*$ are Hermitian positive semi-definite $m \times m$ matrices. 
\par
We shall now follow the ideas of \cite{deng2008dworkin} and construct the isometric embedding of Hilbert spaces
\begin{equation}
\label{Theta}
\Theta: L^2(E^*, \zeta; \C^m) \to L^2(\X(f_0), \mu),
\end{equation}
where  $L^2(E^*, \zeta; \C^m)$ stands for the space of (classes of) functions $E^* \to \C^m$ square-integrable with respect to the matrix-valued diffraction measure $\zeta$ (see \cite{rosenberg1964} for the details). Let as start by defining the action of $\Theta$ on the Schwartz space $\mathcal{S}(E^*, \C^m) \subset L^2(E^*, \zeta; \C^m)$ by the formula
\begin{equation}
\label{ThetaS}
\Theta(\widehat{\varphi} \otimes e_p) := \Gamma_{x_p}(\varphi),
\end{equation}
where $\varphi \in \mathcal{S}(E)$ and $(e_p)$ stands for the canonical basis in $\C^m$. As follows from (\ref{eta}) and (\ref{etazeta}), $\Theta$ intertwines the inner product of $L^2(E^*, \zeta; \C^m)$ restricted to $\mathcal{S}(E^*, \C^m)$ with that of $L^2(\X(f_0), \mu)$. The isometric embedding (\ref{Theta}) is then defined as the continuous extension the map (\ref{ThetaS}) to $L^2(E^*, \zeta; \C^m)$, which is unique by virtue of the following proposition:
\begin{proposition}
	\label{prop:density}
	$\mathcal{S}(E^*, \C^m)$ is dense in $L^2(E^*, \zeta; \C^m)$
\end{proposition}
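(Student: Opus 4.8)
The plan is to factor the desired approximation through the scalar \emph{trace measure} $\tau := \tr\zeta = \sum_{p=1}^{m}\zeta_{pp}$. Since $\eta$, hence each $\zeta_{pq}$ and therefore $\tau$, is a positive tempered measure on $E^*$, and a positive tempered measure has at most polynomial growth, $\tau$ is locally finite, i.e.\ a Radon measure on $E^*\cong\R^d$; in particular $\mathcal{S}(E^*,\C^m)\subset L^2(E^*,\tau;\C^m)$. Following the construction of $L^2$ with respect to a Hermitian matrix-valued measure in \cite{rosenberg1964}, I would write $\zeta=\Phi\,\tau$, where the Radon--Nikodym density $\Phi=\dd\zeta/\dd\tau$ is, for $\tau$-almost every $k$, a Hermitian positive semi-definite $m\times m$ matrix; normalization by the trace forces $\tr\Phi=1$ $\tau$-a.e., so that $0\le\Phi\le I$ pointwise. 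For measurable $g:E^*\to\C^m$ one then has $\|g\|_{L^2(\zeta)}^{2}=\int_{E^*}g(k)^{*}\Phi(k)g(k)\,\dd\tau(k)$, and since $g^{*}\Phi g\le|g|^{2}\tr\Phi=|g|^{2}$, the natural map $\iota:L^2(E^*,\tau;\C^m)\to L^2(E^*,\zeta;\C^m)$ sending a function to its class is well defined (the equivalence relation defining $L^2(\zeta)$ being coarser than $\tau$-a.e.\ equality) and norm non-increasing.

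Two ingredients then finish the argument. First, $\mathcal{S}(E^*,\C^m)$ is dense in $L^2(E^*,\tau;\C^m)$: because $\tau$ is Radon, $C_c(E^*,\C^m)$ is dense there by inner regularity, and convolving with a mollifier and truncating replaces a compactly supported continuous function by one in $C_c^{\infty}(E^*,\C^m)\subset\mathcal{S}(E^*,\C^m)$ with arbitrarily small $L^2(\tau)$ error. Second, $\iota$ has dense range. To see this, let $v\in L^2(E^*,\zeta;\C^m)$ be orthogonal to the range of $\iota$. From $0\le\Phi\le I$ we get $\Phi^{2}\le\Phi$, hence $\int_{E^*}|\Phi v|^{2}\,\dd\tau=\int_{E^*}v^{*}\Phi^{2}v\,\dd\tau\le\|v\|_{L^2(\zeta)}^{2}<\infty$, so $\Phi v\in L^2(E^*,\tau;\C^m)$. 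For every $h\in L^2(E^*,\tau;\C^m)$ we then have $0=\langle\iota h,v\rangle_{L^2(\zeta)}=\int_{E^*}h^{*}\Phi v\,\dd\tau=\langle h,\Phi v\rangle_{L^2(\tau)}$, so $\Phi v=0$ $\tau$-a.e., and therefore $\|v\|_{L^2(\zeta)}^{2}=\int_{E^*}v^{*}(\Phi v)\,\dd\tau=0$; thus $v=0$ in $L^2(\zeta)$, and the range of $\iota$ is dense.

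Combining the two: given $g\in L^2(E^*,\zeta;\C^m)$ and $\varepsilon>0$, choose $h\in L^2(E^*,\tau;\C^m)$ with $\|\iota h-g\|_{L^2(\zeta)}<\varepsilon/2$, then $\psi\in\mathcal{S}(E^*,\C^m)$ with $\|\psi-h\|_{L^2(\tau)}<\varepsilon/2$; since $\iota$ is norm non-increasing and $\iota\psi$ is the class of $\psi$ in $L^2(\zeta)$, $\|\psi-g\|_{L^2(\zeta)}\le\|\psi-h\|_{L^2(\tau)}+\|\iota h-g\|_{L^2(\zeta)}<\varepsilon$. This yields the density of $\mathcal{S}(E^*,\C^m)$ in $L^2(E^*,\zeta;\C^m)$.

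The only genuinely non-routine point is the density of the range of $\iota$, and it is precisely there that the positivity of $\zeta$ and the normalization $\tr\Phi=1$ (equivalently $\Phi\le I$) enter: they make $\iota$ bounded and guarantee that $\Phi v$ returns to $L^2(\tau)$, which is what lets the orthogonality argument close. The remaining steps are the standard regularity-and-mollification toolkit for Radon measures on $\R^d$, together with the setup of \cite{rosenberg1964} for matrix-valued measures, which we may invoke directly.
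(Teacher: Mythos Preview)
Your proof is correct and follows the same overall strategy as the paper: factor through the scalar trace measure $\tau=\tr\zeta$, use that the induced map $\iota:L^2(E^*,\tau;\C^m)\to L^2(E^*,\zeta;\C^m)$ is norm non-increasing, and then combine density of Schwartz functions in $L^2(\tau)$ with density of the range of $\iota$.

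The one genuine difference is in how you establish that $\iota$ has dense range. The paper invokes \cite[Theorem~3.11]{rosenberg1964} to get that simple functions are dense in $L^2(E^*,\zeta;\C^m)$, observes that compactly supported simple functions already lie in $L^2(\tau)\otimes\C^m$ (since $\tau$ is locally finite), and concludes. You instead write $\zeta=\Phi\,\tau$ with $0\le\Phi\le I$ and run a direct orthogonality argument: if $v\perp\mathrm{im}(\iota)$ then $\Phi v\in L^2(\tau)$ (using $\Phi^2\le\Phi$), the pairing forces $\Phi v=0$ $\tau$-a.e., and hence $\|v\|_{L^2(\zeta)}=0$. Your route is a bit more self-contained (it does not need the density of simple functions in the matrix-weighted $L^2$ as a black box), while the paper's is shorter by citation. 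Both are clean; the Radon--Nikodym argument you give is exactly the mechanism that makes Rosenberg's theorem work in the first place.
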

\begin{proof}
	Let us denote the scalar measure given by the trace of $\zeta$ by $\tr(\zeta)$ and consider $L^2(E^*, \tr(\zeta))\otimes \C^m$ as a space of classes of $\C^m\mbox{-valued}$ functions on $E^*$. Since $\zeta$ is non-negative, the norm of $L^2(E^*, \tr(\zeta))\otimes \C^m$ is stronger than that of $L^2(E^*, \zeta; \C^m)$. This defines a continuous linear map
	\begin{equation}
	\label{trace_inclusion}
	L^2(E^*, \tr(\zeta))\otimes \C^m \to L^2(E^*, \zeta; \C^m).
	\end{equation}	
	As follows from \cite[Theorem 3.11]{rosenberg1964}, simple functions are dense in $L^2(E^*, \zeta; \C^m)$. By standard arguments so are also the simple functions with compact support. Since $\tr(\zeta)$ is locally finite, the latter belong to the image of (\ref{trace_inclusion}), which is therefore dense in $L^2(E^*, \zeta; \C^m)$. Let $C_c(E^*, \C^m)$ stand for the space of continuous $\C^m\mbox{-valued}$ functions with compact support. These functions are approximated by those of $\mathcal{S}(E^*, \C^m)$ uniformly, and thus also in the norm of $L^2(E^*, \tr(\zeta))\otimes \C^m$. It remains to show that $C_c(E^*, \C^m)$ is dense in $L^2(E^*, \tr(\zeta))\otimes \C^m$, which follows immediately from \cite[Theorem 3.14]{rudin1985real}.
\end{proof}
\par
Let us show now that the isometric embedding (\ref{Theta}) is equivariant with respect to translations of $E$. For any $t\in E$ let $S_t$ stand for the unitary operator on $L^2(E^*, \zeta; \C^m)$ defined by the formula
\begin{equation}
\label{eq:S_t}
\left( S_t g \right)(k) := \exp(2 \pi \ii k \cdot t) g(k), 
\end{equation}
where $g \in L^2(E^*, \zeta; \C^m)$ and $k \in E^*$. The translation $t$ also acts on $L^2(\X(f_0), \mu)$ by the unitary operator $T_t$:
\begin{equation}
\nonumber
\left( T_t h \right)(f) := h(T_{-t}f), 
\end{equation}
where $h \in L^2(\X(f_0), \mu)$, $f \in \X(f_0)$  and $T_{-t}f$ is given by the formula (\ref{T_t_Hull}). In particular, for   $\Gamma_x(\varphi) \in L^2(\X(f_0), \mu)$ in (\ref{Gamma}) we have
\begin{equation}
\label{eq:translation}
\left(T_t\Gamma_x(\varphi)\right) (f) =
\Gamma_x(\varphi)(T_{-t}f)=
\Gamma_x(T_{-t}\varphi)(f),
\end{equation}
where $(T_t\varphi)(y):=\varphi(y-t)$. Therefore, as follows from (\ref{ThetaS})
\begin{equation}
\nonumber
\Theta(S_t(\widehat{\varphi} \otimes e_p)) = \Gamma_{x_p}(T_{-t}\varphi)= T_t \Gamma_{x_p}(\varphi) =
T_t (\Theta(\widehat{\varphi} \otimes e_p))
\end{equation}
Then, by virtue of Proposition \ref{prop:density}, the identity
\begin{equation}
\label{eq:intertwine}
\Theta S_t = T_t \Theta
\end{equation}
holds on the entire Hilbert space $L^2(E^*, \zeta; \C^m)$. In other words, the isometric embedding (\ref{Theta}) intertwines the action of $E$ on $L^2(E^*, \zeta; \C^m)$ by $S_t$ with that on $L^2(\X(f_0), \mu)$ by $T_t$.
\par
The notable consequence \cite{deng2008dworkin,baake2016spectral,baake2004dynamical} of the identity (\ref{eq:intertwine}) is that the pure point part $\mathcal{B}$ of the diffraction measure $\zeta$ is a subset of the pure-point part $\mathcal{E} \subset E^*$ of the dynamical spectrum of $(\X(f_0), E, \mu)$, which is defined as the set of values $k \in E^*$ for which there exists a non-zero eigenfunction $\psi_k \in L^2(\X(f_0), \mu)$:
\begin{equation}
\nonumber 
T_t \psi_k = \exp(2 \pi \ii k \cdot t) \psi_k \qquad \text{for any } t \in E
\end{equation}
\par
While the dependence of the diffraction measure $\eta$ on the weights $w_p$ in (\ref{varrho}) is captured by the matrix-valued measure $\zeta$, the latter still depends on the positions of the atomic decorations $\{x_1, \dots, x_m\} \subset |B|$. The following result shows that the pure-point part of the diffraction measure can be expressed in terms of individual contributions of each atomic site $x_p$, thus justifying the formula (\ref{eq:contib_naive}):
\begin{proposition}
	\label{def_a_k}
	Let $f_0:E \to |B|$ be an isometric winding and let $\mu$ be a translationally invariant probability measure on its hull $\X(f_0)$. Then for any eigenvalue $k \in \mathcal{E}$ and a corresponding normalized eigenfunction $\psi_k \in L^2(\X(f_0), \mu)$ there exists a (not necessarily continuous) function $a_k: |B| \to \C$ such that for any set of atomic decorations $\{x_1, \dots, x_m\} \subset |B|$ in (\ref{varrho}) holds the identity
    \begin{equation}
    \label{c_k}
    \left\langle \psi_k, \Theta(1_{\{k\}}\otimes e_p) \right \rangle = a_k(x_p).
    \end{equation}
	Moreover, if $\mu$ is ergodic, one also has
	\begin{equation}
	\label{eq:peak}
	\eta(\{k\})=\left|\sum_{p=1}^m w_p a_{k}(x_p)\right|^2.
	\end{equation}
	We shall refer to the functions $a_k$ as the partial diffraction amplitudes of the dynamical system $(\X(f_0), E, \mu)$.
\end{proposition}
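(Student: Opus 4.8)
The plan is to read the functions $a_k$ directly off the isometric embedding $\Theta$ and the intertwining identity (\ref{eq:intertwine}), so that the statement reduces to a single remark about $S_t$-eigenvectors. The remark is: if $g\in L^2(E^*,\zeta;\C^m)$ (or $g\in L^2(E^*,\tau)$ for any positive tempered scalar measure $\tau$) satisfies $S_t g=\exp(2\pi\ii k\cdot t)\,g$ for every $t\in E$, then $g$ is concentrated on $\{k\}$, i.e. $g=1_{\{k\}}\otimes v$ for a single $v\in\C^m$. To see this, observe that $\|S_t g-\exp(2\pi\ii k\cdot t)\,g\|^2=\int|\exp(2\pi\ii(\ell-k)\cdot t)-1|^2\,\dd\nu_g(\ell)$, where $\nu_g$ is the positive scalar measure $\Omega\mapsto\int_\Omega g^*\,\dd\zeta\,g$; thus for each $t$ the set $\{\ell:(\ell-k)\cdot t\notin\Z\}$ is $\nu_g$-null, and intersecting these over a countable dense set of $t\in E$ leaves only $\ell=k$. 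Applying this to $g=\Theta^*\psi_k$, which satisfies $S_t\Theta^*\psi_k=\Theta^*T_t\psi_k=\exp(2\pi\ii k\cdot t)\,\Theta^*\psi_k$ (take adjoints in (\ref{eq:intertwine}) and use that $S_t,T_t$ are unitary), gives $\Theta^*\psi_k=1_{\{k\}}\otimes v$ with $v\in\C^m$ depending only on $\psi_k$ and $k$.

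Next I would pin down $a_k$ as a function on all of $|B|$, independent of the decoration, by rerunning the constructions of Section \ref{sec:partial} for the single-site decoration $\{x\}$ with weight $1$: this produces a positive scalar measure $\zeta_x$ and an isometric embedding $\Theta_x:L^2(E^*,\zeta_x)\to L^2(\X(f_0),\mu)$ with $\Theta_x(\widehat\varphi)=\Gamma_x(\varphi)$. By the remark, $\Theta_x^*\psi_k=c(x)\,1_{\{k\}}$, and I set $a_k(x):=\langle\psi_k,\Theta_x(1_{\{k\}})\rangle$. Unwinding the adjoint then yields $\langle\psi_k,\Gamma_x(\varphi)\rangle=\langle\Theta_x^*\psi_k,\widehat\varphi\rangle_{\zeta_x}=a_k(x)\,\widehat\varphi(k)$ for every $\varphi\in\mathcal S(E)$; picking any $\varphi$ with $\widehat\varphi(k)\neq0$ shows $a_k(x)=\langle\psi_k,\Gamma_x(\varphi)\rangle/\widehat\varphi(k)$, which makes it plain that $a_k$ does not refer to the multiset $\{x_1,\dots,x_m\}$.

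To obtain (\ref{c_k}), I return to the embedding $\Theta$ attached to $\{x_1,\dots,x_m\}$. Using $\Theta^*\psi_k=1_{\{k\}}\otimes v$, the identity $\Gamma_{x_p}(\varphi)=\Theta(\widehat\varphi\otimes e_p)$, and the inner-product conventions carried over from (\ref{eta})--(\ref{etazeta}), one computes both $\langle\psi_k,\Theta(1_{\{k\}}\otimes e_p)\rangle=\langle\Theta^*\psi_k,1_{\{k\}}\otimes e_p\rangle$ and $\langle\psi_k,\Gamma_{x_p}(\varphi)\rangle/\widehat\varphi(k)=\langle\Theta^*\psi_k,\widehat\varphi\otimes e_p\rangle/\widehat\varphi(k)$ to be the same number $v^*\zeta(\{k\})\,e_p$; since the latter equals $a_k(x_p)$ by the previous paragraph, (\ref{c_k}) follows. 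For the ergodic case I would first recall that ergodicity makes the eigenvalue $k$ simple: $|\psi_k|$ is $T$-invariant, hence a.e. equal to $\|\psi_k\|$, and for any other $k$-eigenfunction $\phi$ the ratio $\phi\,\overline{\psi_k}$ is $T$-invariant, hence a.e. constant, so the $k$-eigenspace is $\C\psi_k$. Now $T_t\Theta(1_{\{k\}}\otimes e_p)=\Theta S_t(1_{\{k\}}\otimes e_p)=\exp(2\pi\ii k\cdot t)\,\Theta(1_{\{k\}}\otimes e_p)$, so $\Theta(1_{\{k\}}\otimes e_p)$ lies in that eigenspace and therefore equals $a_k(x_p)\,\psi_k$ by (\ref{c_k}) and $\|\psi_k\|=1$. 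Summing with weights, $\Theta\big(1_{\{k\}}\otimes w\big)=\big(\sum_{p}w_p\,a_k(x_p)\big)\psi_k$ with $w=(w_1,\dots,w_m)$, and equating squared norms — $\Theta$ is an isometry and $\|1_{\{k\}}\otimes w\|^2=w^*\zeta(\{k\})\,w=\eta(\{k\})$ by (\ref{etazeta}) — gives (\ref{eq:peak}).

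The argument has no conceptually deep step; the only input from outside the excerpt is the simplicity of the eigenvalue under ergodicity, which is the standard Koopman-operator fact sketched above. What needs care is the bookkeeping: keeping the matrix-measure inner product of $L^2(E^*,\zeta;\C^m)$, the Fourier normalization of (\ref{eta}), the antilinearity convention, and the adjoint of $\Theta$ mutually consistent, so that the two evaluations of $a_k(x_p)$ in the third paragraph really coincide — that is the part I expect to be fiddly. The remark about $S_t$-eigenvectors is where the actual content enters: an $L^2$-eigenfunction of the translation action has its spectral measure concentrated at the corresponding wave vector.
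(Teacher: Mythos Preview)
Your argument is correct and takes a somewhat different route from the paper's. The paper avoids $\Theta^*$ and your spectral remark entirely: it fixes $x\in|B|$, considers the tempered distribution $A_{k,x}(\varphi):=\langle\psi_k,\Gamma_x(\varphi)\rangle$, and observes directly from the translation covariance $\Gamma_x(T_t\varphi)=T_{-t}\Gamma_x(\varphi)$ that $A_{k,x}(T_t\varphi)=\exp(-2\pi\ii k\cdot t)A_{k,x}(\varphi)$; the only tempered distributions with this covariance are scalar multiples of $\varphi\mapsto\widehat\varphi(k)$, which yields $\langle\psi_k,\Gamma_x(\varphi)\rangle=a_k(x)\widehat\varphi(k)$ in one stroke, with $a_k$ defined on all of $|B|$ and without your auxiliary single-site embeddings $\Theta_x$. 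To pass from $\widehat\varphi$ to $1_{\{k\}}$ and obtain (\ref{c_k}), the paper notes that multiplication by $1_{\{k\}}$ is the $S_t$-eigenprojector and is intertwined by $\Theta$ with the $T_t$-eigenprojector --- this is the abstract form of your remark, invoked rather than reproved. The ergodic part is essentially identical to yours. In effect the two arguments are dual: you characterize $S_t$-eigenvectors in $L^2$ (concentrated on $\{k\}$) and push everything through the adjoint; the paper characterizes translation-covariant tempered distributions (multiples of the plane wave) and works forward through $\Gamma_x$. The paper's route is a little leaner because $\Gamma_x$ is already defined for every $x$, sparing the single-site detour; your route makes the spectral content explicit and keeps every step an inner-product computation, at the cost of the extra bookkeeping you yourself flag.
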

\begin{proof}
    Let us consider the distribution $A_{k,x} \in \mathcal{S}'(E)$ defined by the formula
    $$
    A_{k,x}(\varphi) := \langle \psi_k, \Gamma_x(\varphi) \rangle.
    $$
    Using (\ref{eq:translation}) we get
    \begin{equation}
    \nonumber
    \langle\psi_k, \Gamma_x(T_{t}\varphi)\rangle=
    \langle T_{t}\psi_k, \Gamma_x(\varphi)\rangle,
    \end{equation}
    hence $A_{k,x}$ satisfies the following equation: 
	\begin{equation}
	\label{A_kx}
	A_{k,x}(T_t\varphi)=
	\exp(-2\pi \ii k \cdot t)A_{k,x}(\varphi).
	\end{equation}
	The solutions of this equation in $\mathcal{S}'(E)$ have the form
	$$
	\varphi \mapsto a \int_E \varphi(y) \exp(-2\pi \ii k \cdot y) \dd y,
	$$
   	where $a$ is an arbitrary complex constant. Therefore there exists a function $a_k: |B| \to \C$ such that
	\begin{equation}
	\label{a_k}
	\langle \psi_k,  \Gamma_x(\varphi)\rangle = a_k(x) \widehat{\varphi}(k)
	\end{equation}
	for any $\varphi \in \mathcal{S}(E)$ and any $x \in |B|$. Then as follows from (\ref{ThetaS})
	\begin{equation}
	\label{akxp}
	\left\langle \psi_k, \Theta(\widehat{\varphi}\otimes e_p) \right \rangle = a_k(x_p)\widehat{\varphi}(k)
	\end{equation}
	Since the multiplication by $1_{\{k\}}$ in $L^2(E^*, \zeta; \C^m)$ is the projector on the ei\-gen\-space of $S_t$ (\ref{eq:S_t}), the embedding $\Theta$ intertwines it with the projector on the corresponding eigenspace of $T_t$ and we have
    \begin{equation}
    \nonumber 
    \left\langle \psi_k, \Theta(1_{\{k\}}\widehat{\varphi}\otimes e_p )\right \rangle
    =
    \left\langle \psi_k, \Theta(\widehat{\varphi}\otimes e_p )\right \rangle.
    \end{equation}
	Combining this identity with (\ref{akxp}) yields (\ref{c_k}). 
	\par
	If $\mu$ is ergodic, all eigenvalues of the dynamical system $(\X(f_0), E, \mu)$ are simple and 
	\begin{equation}
	\nonumber
	\Theta(1_{\{k\}}\otimes e_p) = \langle \psi_k, \Theta(1_{\{k\}}\otimes e_p)\rangle\psi_k = a_k(x_p) \psi_k.
	\end{equation}
	Since $\Theta$ is an isometry, this yields
	$$
	\zeta_{pq}(\{k\})=
	\left\langle 1_{\{k\}} \otimes e_p, 1_{\{k\}}\otimes  e_q \right\rangle= \overline{a_k(x_p)} a_k(x_q),
	$$
	and recalling (\ref{etazeta}) we finally get (\ref{eq:peak}).
\end{proof} 
\section{Constraints on partial amplitudes of a single Bragg peak}
\label{sec:single}
We are now going to study the behavior of the partial amplitudes $a_k$ as functions on $|B|$. The results can be conveniently formulated in terms of semi-simplicial vector spaces (see Appendix A)\footnote{In the case when the semi-simplicial set $B$ describes a regular cellular complex the results of Sections \ref{sec:single} and \ref{sec:subspace} can also be formulated in terms of homology of cellular cosheaves \cite{curry2014sheaves} on $|B|$. While this approach might be more natural for the case of no-simplicial tilings, it is not immediately applicable to some relevant examples when the cellular complex associated with $B$ is not regular.}. Let us start by constructing a family of such spaces parameterized by a vector $k \in E^*$.  
\par
Let $\mathcal{F}_{(k),n}$ be the space of functions $|B|\to \C$ spanned by  $\{Y_s: s\in B_n\}$, where
\begin{equation}
\label{eq:Y_s}
Y_s(x):=\begin{cases}
\exp(-2 \pi \ii k \cdot \alpha_s(x)) &\mbox{if } x \in |s| \\
0 & \mbox{if } x \notin |s|
\end{cases}
\end{equation}
Since all functions $\{Y_s: s\in B_\bullet\}$ are linearly independent, they form a basis of the direct sum
$$
\mathcal{F}_{(k), \bullet}:=\bigoplus_{n=0}^d \mathcal{F}_{(k),n},
$$
which is naturally a subspace of the space of all complex-valued functions on $|B|$. 
\begin{proposition}
	\label{prop:in_F}
	Let $a_k$ be the partial diffraction amplitudes defined in Proposition \ref{def_a_k}. Then for any $k \in \mathcal{E}$
	\begin{equation}
	\label{in_sheaf}
	a_k \in \mathcal{F}_{(k), \bullet}.
	\end{equation}
\end{proposition}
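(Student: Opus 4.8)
The plan is to show that the partial amplitude $a_k$, restricted to the interior $|s|$ of any prototile $s \in B_n$, coincides with a scalar multiple of $Y_s$; summing over the (finitely many) cells of $B$ then places $a_k$ in $\mathcal{F}_{(k),\bullet}$. The starting point is the characterization of $a_k$ obtained in the proof of Proposition \ref{def_a_k}: for any $x \in |B|$ and any test function $\varphi \in \mathcal{S}(E)$ one has $\langle \psi_k, \Gamma_x(\varphi)\rangle = a_k(x)\,\widehat\varphi(k)$. So I need to understand how $\Gamma_x(\varphi)$ varies as $x$ moves inside a fixed cell $|s|$.

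First I would fix $s \in B_n$ and a point $f$ in the hull. By the covering-map property of isometric windings, the preimage $f^{-1}(|s|)$ is a disjoint union of connected components, each mapped homeomorphically onto $|s|$ via $f$, and on each such component $\sigma$ the map $\alpha_s \circ f$ is a translation, say by a vector $v_\sigma(f) \in E$. Consequently, for $x \in |s|$, the fibre $f^{-1}(x)$ meets each component $\sigma$ in the single point $y_\sigma(f,x)$ characterized by $\alpha_s(x) = y_\sigma(f,x) + v_\sigma(f)$, i.e. $y_\sigma(f,x) = \alpha_s(x) - v_\sigma(f)$. Thus moving $x$ within $|s|$ translates every point of the (portion of the) fibre lying over $|s|$ rigidly by the same amount $\alpha_s(x)$. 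This is precisely the phenomenon flagged informally in the introduction: displacing the decorating point inside a prototile shifts the whole point set by a fixed vector, hence multiplies the amplitude by a phase.

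Next I would make this rigorous at the level of the map $\Gamma_x$. Pick a base point $x_0 \in |s|$ and compare $\Gamma_x(\varphi)$ with $\Gamma_{x_0}(\psi)$ for a suitably translated test function. Writing $r = \alpha_s(x) - \alpha_s(x_0) \in E$, the bijection $\sigma \mapsto \sigma$ between components together with $y_\sigma(f,x) = y_\sigma(f,x_0) + r$ gives $\sum_{y \in f^{-1}(x)} \varphi(-y) = \sum_{y \in f^{-1}(x_0)} \varphi(-y - r)$ restricted to the part of the fibre over $|s|$; since in fact the \emph{entire} fibres $f^{-1}(x)$ and $f^{-1}(x_0)$ are in this translation correspondence (every point of $|s|$ is an interior point, so there are no fibre points ``on a boundary'' to worry about), we get the clean identity $\Gamma_x(\varphi) = \Gamma_{x_0}(T_{-r}\varphi)$ in $L^2(\X(f_0),\mu)$, where $(T_{-r}\varphi)(y) = \varphi(y+r)$. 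Pairing with $\psi_k$ and using $\widehat{T_{-r}\varphi}(k) = \exp(2\pi \ii k \cdot r)\,\widehat\varphi(k)$ together with the identity from Proposition \ref{def_a_k} yields $a_k(x)\,\widehat\varphi(k) = a_k(x_0)\exp(2\pi\ii k\cdot r)\,\widehat\varphi(k)$. Choosing $\varphi$ with $\widehat\varphi(k)\neq 0$ gives $a_k(x) = a_k(x_0)\exp\!\bigl(2\pi\ii k\cdot(\alpha_s(x_0)-\alpha_s(x))\bigr)$, i.e. $a_k\big|_{|s|} = c_s\, Y_s$ with $c_s = a_k(x_0)\exp(2\pi\ii k\cdot\alpha_s(x_0))$. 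Doing this for every cell of $B$ and adding up shows $a_k = \sum_{n=0}^d\sum_{s\in B_n} c_s Y_s \in \mathcal{F}_{(k),\bullet}$.

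The main obstacle is the bookkeeping around fibre points that lie over the topological boundary of $|s|$ in $|B|$, i.e. making the identity $\Gamma_x(\varphi) = \Gamma_{x_0}(T_{-r}\varphi)$ genuinely hold rather than hold only up to boundary terms. The cleanest fix is to run the argument cell by cell on the open cells $|s|$ (where $x$ is by convention an interior point, so its fibre consists of interior points of the corresponding components and the translation correspondence is exact), and to define $a_k$ separately on each such open cell; the $Y_s$ are supported precisely on $|s|$, so no compatibility across cells is needed for membership in the direct sum $\mathcal{F}_{(k),\bullet}$. A secondary technical point is the continuity of $f \mapsto v_\sigma(f)$ and the well-definedness of the component bijection, but these are immediate from the definition of isometric winding and the local structure of $|B|$, so I would treat them briefly.
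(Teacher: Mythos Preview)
Your approach is essentially the same as the paper's: fix two points in a single open cell $|s|$, use the isometric-winding property to see that their fibres differ by the translation $t=\alpha_s(x)-\alpha_s(x_0)$, pass this through $\Gamma$ and the eigenfunction identity (\ref{a_k}) to obtain the phase relation, and conclude $a_k\big|_{|s|}\in\C Y_s$. One minor slip: from $f^{-1}(x)=f^{-1}(x_0)+r$ and the definition of $\Gamma$ one gets $\Gamma_x(\varphi)=\Gamma_{x_0}(T_{r}\varphi)$ (with the paper's convention $(T_t\varphi)(y)=\varphi(y-t)$), not $T_{-r}\varphi$; your final formula $a_k(x)=a_k(x_0)\exp\!\bigl(2\pi\ii k\cdot(\alpha_s(x_0)-\alpha_s(x))\bigr)$ is nonetheless the correct one, so two sign flips have cancelled.
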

\begin{proof}
	Consider a simplex $s\in B_\bullet$ and two points $x_1, x_2 \in |s|$ and let
	$$
	t=\alpha_s(x_2)-\alpha_s(x_1).
	$$
	Then $f^{-1}(x_2)=f^{-1}(x_1)+t$ for any isometric winding $f: E \to |B|$ and 
	$$
	\Gamma_{x_2}(\varphi)=  
	\Gamma_{x_1}(T_t\varphi)
	$$
	for any $\varphi \in \mathcal{S}(E)$. Therefore, as follows from (\ref{A_kx}),
	$$
	a_k(x_2)=
	\exp(-2 \pi \ii k\cdot t)
	a_k(x_1),
	$$
	which proves (\ref{in_sheaf}).
\end{proof}
Let us now provide the graded vector space $\mathcal{F}_{(k),\bullet}$ with the face operators $\delta_{n,i}: \mathcal{F}_{(k), n} \to \mathcal{F}_{(k), n-1}$, defined via the face maps $\delta_{n,i}: B_n \to B_{n-1}$ as:
\begin{equation}
\label{eq:face_F}
\delta_{n,i}(Y_s)=\exp(-2 \pi \ii k \cdot t_{s,i}) Y_{\delta_{n,i}s}
\end{equation}
where $n=\dim(s)$ and $t_{s,i} \in E$ is given by
\begin{equation}
\label{eq:tsi}
t_{s,i}:=\begin{cases}
0 &\mbox{if } i\neq 0 \\
\rho(e_{n,1}(s)) & \text{if } i=0
\end{cases}
\end{equation}\\
\begin{figure}[ht]
	\centering
	\includegraphics[width=0.8\linewidth]{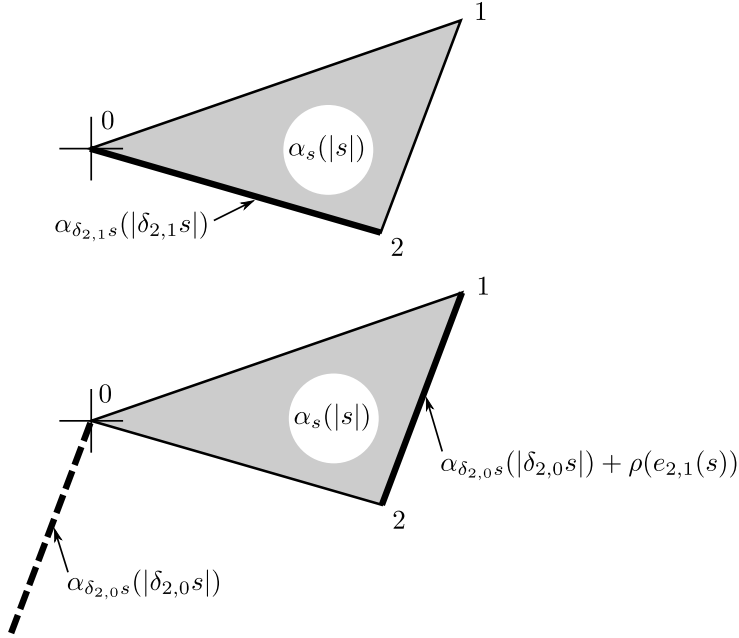}
	\caption{The affine simplex $\alpha_s(|s|) \subset E$ and its $i\mbox{-th}$ face (bold solid line). The cross-hair represents the origin of $E$, numbers are the indices of the vertices of $|s|$. Top: in the case $i \neq 0$, the zeroth vertex of $|\delta_{n,i}s|$ coincides with the zeroth vertex of $|s|$, thus $\alpha_{\delta_{n,i}s}(|\delta_{n,i}s|)$ is the $i\mbox{-th}$ face of $\alpha_s(|s|)$. Bottom: in the case $i=0$, the zeroth vertex of $|\delta_{n,i}s|$ coincides with the first vertex of $|s|$, and $\alpha_{\delta_{n,0}s}(|\delta_{n,0}s|)$ (dashed line) has to be shifted by $\rho(e_{n,1}(s))$ to obtain the zeroth face of $\alpha_s(|s|)$.}
	\label{fig:boundary}
\end{figure}
Note that with the expression (\ref{eq:tsi}) for $t_{s,i}$, the affine simplex $\alpha_{\delta_{n,i}s}(|\delta_{n,i}s|)+t_{s,i}$ is the  $i\mbox{-th}$ face of the affine simplex $\alpha_s(|s|)$ for any $0\le i\le d$ (see Figure \ref{fig:boundary}). Consequently, the linear operator $\delta_{n,i}$ in (\ref{eq:face_F}) acts on the basis function $Y_s$ by its continuation to the $i\mbox{-th}$ face of $|s|$. Therefore, the operators $\delta_{n,i}$ satisfy the simplicial identity (\ref{face_identity}) and provide  $\mathcal{F}_{(k), \bullet}$ with the structure of a semi-simplicial vector space which we shall denote by $\mathcal{F}_{(k)}$.
\par
To formulate the main result of this section, we shall need an orientation of $d\mbox{-simplices}$ $\mathfrak{S}: B_d \to \{-1,1\}$, which can be introduced by fixing a non-zero constant $d\mbox{-form}$ $\Omega \in \bigwedge^d E^*$:
\begin{equation}
\label{eq:sign}
\mathfrak{S}(s):=\mathrm{sgn}\left(\Omega \cdot \bigwedge_{i=1}^d \rho(e_{d, i}(s))\right)
\end{equation}
Let $\mathfrak{s}: |B| \to \{-1,0,1\}$ be defined as
\begin{equation}
\nonumber
\mathfrak{s}(x) := \begin{cases}
\mathfrak{S}(s) & \mbox{if } x \in |s|\mbox{ where } s \in B_d\\
0 & \mbox{otherwise}\end{cases}
\end{equation}
\begin{theorem}
	\label{one_peak}
	Let $a_k$ be the partial diffraction amplitudes defined in Proposition \ref{def_a_k}. Then for any $k \in \mathcal{E}$, the function $\mathfrak{s}a_k$ is a $d\mbox{-cycle}$ of the chain complex $(\mathcal{F}_{(k), \bullet}, \partial_\bullet)$:
    \begin{equation}
    \label{eq:cycle}
    \partial_d (\mathfrak{s}a_k)=0.
    \end{equation}
\end{theorem}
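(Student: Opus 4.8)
The plan is to verify (\ref{eq:cycle}) coefficient by coefficient in the basis $\{Y_{s'}:s'\in B_{d-1}\}$ of $\mathcal{F}_{(k),d-1}$. Writing $a_k=\sum_{s\in B_\bullet}a_{k,s}Y_s$ for the expansion granted by Proposition \ref{prop:in_F}, so that $\mathfrak{s}a_k=\sum_{s\in B_d}\mathfrak{S}(s)\,a_{k,s}\,Y_s$, and using (\ref{eq:face_F}) together with the alternating-sum definition of $\partial_\bullet$, the identity $\partial_d(\mathfrak{s}a_k)=0$ is equivalent to the family of scalar relations
\[
\sum_{(s,i)\,:\,\delta_{d,i}s=s'}\mathfrak{S}(s)\,(-1)^i\,\exp(-2\pi\ii k\cdot t_{s,i})\,a_{k,s}=0,\qquad s'\in B_{d-1},
\]
one per $(d-1)$-prototile $s'$. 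So it suffices to produce, for each $s'$, such a ``conservation law'' among the numbers $a_{k,s}$ attached to the $d$-prototiles having $s'$ as a face.

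The main new tool is a one-sided boundary value of the maps $\Gamma_x$ of (\ref{Gamma}). Fix $s\in B_d$, an index $i$, put $s':=\delta_{d,i}s$ and fix $x_0\in|s'|$. For $x\in|s|$ tending to $x_0$ one has $\alpha_s(x)\to v+t_{s,i}$, where $v:=\alpha_{s'}(x_0)$ and $v+t_{s,i}$ is the corresponding point on the $i$-th facet of $\sigma_s$ (which is precisely what the shift (\ref{eq:tsi}) encodes). Since every $f\in\X(f_0)$ is an isometric winding, $f^{-1}(x)$ then moves continuously towards the part of $f^{-1}(x_0)$ carried by the $i$-th facets of the $s$-tiles of $\mathcal{T}_f$; combining the uniform discreteness supplied by finite local complexity (valid for all of $\X(f_0)$ at once) with the rapid decay of $\varphi$, the sums defining $\Gamma_x(\varphi)$ are dominated uniformly in $x$ near $x_0$ and in $f$, so $\Gamma_x(\varphi)$ converges in $L^2(\X(f_0),\mu)$ to a limit $\Gamma^{(s,i)}_{x_0}(\varphi)$ with $\Gamma^{(s,i)}_{x_0}(\varphi)(f)=\sum_\theta\varphi(-y_\theta)$, the sum running over those $(d-1)$-tiles $\theta$ of $\mathcal{T}_f$ that occur as the $i$-th facet of some $s$-tile, and $y_\theta$ the point of $f^{-1}(x_0)$ lying on $\theta$. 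By continuity of $\langle\psi_k,\cdot\rangle$ and (\ref{a_k}) this yields
\[
\langle\psi_k,\Gamma^{(s,i)}_{x_0}(\varphi)\rangle=a_{k,s}\,\exp(-2\pi\ii k\cdot(v+t_{s,i}))\,\widehat\varphi(k).
\]

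The geometric heart of the argument is a sign identity to be combined with the above: if a $(d-1)$-tile $\theta$ of some winding lies between two $d$-tiles, being the $i_1$-th facet of one of type $s_1$ and the $i_2$-th facet of the other of type $s_2$, then $\mathfrak{S}(s_1)(-1)^{i_1}+\mathfrak{S}(s_2)(-1)^{i_2}=0$. Indeed, the two $d$-tiles are distinct (by Definition \ref{FBS} a $d$-simplex is non-degenerate, hence not a translate of a neighbour across one of its facets) and lie on opposite sides of the hyperplane of $\theta$, so carrying the ambient orientation of $E$ fixed by $\Omega$ they induce opposite orientations on $\theta$; on the other hand the ambient orientation of a tile of type $s$ differs from the one given by the ordered edge vectors $\rho(e_{d,1}(s)),\dots,\rho(e_{d,d}(s))$ exactly by the factor $\mathfrak{S}(s)$ of (\ref{eq:sign}), while the ordered-edge orientation of a simplex induces on its $i$-th facet $\sigma_{\delta_{d,i}s}+t_{s,i}$ precisely $(-1)^i$ times the latter's ordered-edge orientation (the standard simplicial convention underlying $\partial_\bullet$, into which the face maps $\delta_{n,i}$ and the vertex ordering are built). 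Comparing the two ways of computing the induced orientation of $\theta$ gives the identity.

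It remains to assemble these ingredients. Fix $s'\in B_{d-1}$ and $x_0\in|s'|$, and form $\sum_{(s,i):\delta_{d,i}s=s'}\mathfrak{S}(s)(-1)^i\,\Gamma^{(s,i)}_{x_0}(\varphi)$. Evaluating at $f$ and regrouping the sum over the $(d-1)$-tiles $\theta$ of $\mathcal{T}_f$ (each being adjacent to exactly two $d$-tiles), the contribution of every $\theta$ is $\big(\mathfrak{S}(s_1)(-1)^{i_1}+\mathfrak{S}(s_2)(-1)^{i_2}\big)\varphi(-y_\theta)=0$ by the sign identity, so this combination vanishes in $L^2(\X(f_0),\mu)$. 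Pairing with $\psi_k$ and inserting the value of $\langle\psi_k,\Gamma^{(s,i)}_{x_0}(\varphi)\rangle$ found above gives
\[
\widehat\varphi(k)\,\exp(-2\pi\ii k\cdot v)\sum_{(s,i):\delta_{d,i}s=s'}\mathfrak{S}(s)(-1)^i\exp(-2\pi\ii k\cdot t_{s,i})\,a_{k,s}=0,
\]
and choosing $\varphi\in\mathcal{S}(E)$ with $\widehat\varphi(k)\neq0$ (which always exists, for every $k\in\mathcal{E}$ including $k=0$) produces exactly the desired relation, for all $s'$; hence (\ref{eq:cycle}). The two points requiring care are the orientation bookkeeping behind the sign identity — the genuine ``Kirchhoff'' mechanism alluded to in the introduction — and the justification of the $L^2$-convergence $\Gamma_x(\varphi)\to\Gamma^{(s,i)}_{x_0}(\varphi)$; both rest only on standard facts (the simplicial boundary formula and uniform discreteness from FLC), and the rest is formal.
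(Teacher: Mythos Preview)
Your proof is correct and takes a genuinely different route from the paper's. Both rest on the same combinatorial core --- your sign identity $\mathfrak{S}(s_1)(-1)^{i_1}+\mathfrak{S}(s_2)(-1)^{i_2}=0$ for the two $d$-tiles flanking any $(d-1)$-tile is precisely the paper's splitting of the neighbour set into $N^+$ and $N^-$ according to the sign of $(-1)^{i+1}\mathfrak{S}(s)$ --- but the analytic mechanism for turning this into (\ref{eq:cycle}) differs.

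You work by passage to the boundary: letting $x\in|s|$ approach $x_0\in|s'|$, you form one-sided operators $\Gamma^{(s,i)}_{x_0}(\varphi)$ as $L^2$-limits of $\Gamma_x(\varphi)$, observe that the signed sum $\sum_{(s,i)}\mathfrak{S}(s)(-1)^i\Gamma^{(s,i)}_{x_0}(\varphi)$ vanishes identically on the hull by pairwise cancellation across each $(d-1)$-tile, and only then pair with $\psi_k$. The paper instead avoids any limit. Fixing the $(d-1)$-simplex (which it calls $s$) and a point $x$ on it, it chooses a small nonzero vector $t\in E$ so that $\alpha_s(x)\pm t$ lands in the common intersection $\Sigma_s^\pm$ of the (shifted) affine $d$-simplices on each side, picks interior points $x_{s',i}$ mapping there, and uses the \emph{exact} set identity $f^{-1}(X^\pm)=f^{-1}(x)\pm t$ for every winding $f$ together with (\ref{A_kx}) to obtain $\exp(\pm2\pi\ii k\cdot t)\sum_{x'\in X^\pm}a_k(x')=a_k(x)$; the two halves of (\ref{eq:cancellation}) then cancel.

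What each approach buys: the paper's offset trick is purely finite and algebraic --- no dominated convergence, no boundary operators --- and as a byproduct produces the relation (\ref{eq:codim}) expressing $a_k$ on the $(d-1)$-skeleton linearly through its values on the top cells, a fact the paper exploits immediately after the theorem. Your version makes the Kirchhoff picture literal: the cancellation happens already at the level of $L^2$-functions on the hull, before any projection onto $\psi_k$, which is conceptually clean and would carry over verbatim to other spectral components. The price is the analytic justification of the limit, which, as you say, is routine from uniform discreteness.
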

\begin{proof}
    Note first that since $a_k \in \mathcal{F}_{(k), \bullet}$ by Proposition \ref{prop:in_F} and $\mathfrak{s}$ is constant on all $d\mbox{-simplices}$ of $|B|$ and zero elsewhere, we have $\mathfrak{s}a_k \in \mathcal{F}_{(k), d}$. To prove the cycle condition (\ref{eq:cycle}) it suffices to show that $\partial_d (\mathfrak{s}a_k)$ vanishes on any $(d-1)\mbox{-simplex}$ $|s|\subset |B|$. As follows from (\ref{eq:face_F}), the values of $\partial_d (\mathfrak{s}a_k)$ on $|s|$ depend only on the values of $\mathfrak{s}a_k$ on the neighboring $d\mbox{-simplices}$ of $|B|$. Let us denote the set of those neighbors (considered together with the index of the face corresponding to $s$) by $N_s$:  
    \begin{equation}
    \nonumber
    N_s := \left\{
    (s',i) \in B_d \times \{0,\dots,d\} \mathrel{}\middle|\mathrel{}
    \delta_{d,i}s' = s
    \right\}
    \end{equation}
    As follows from (\ref{eq:Y_s}) and (\ref{eq:face_F}), for any $(s', i) \in N_s$ and for any points $x \in |s|$ and $x' \in |s'|$ one has
    \begin{equation}
    \label{eq:two_points}
    \left(\delta_{n,i}(Y_s')\right)(x) = \exp\left(-2 \pi \ii k \cdot(\alpha_s(x)-\alpha_{s'}(x')+ t_{s',i})\right) Y_{s'}(x')
    \end{equation}
    The identity (\ref{eq:two_points}) allows to express the value of $a_k$ at a point $x \in |s|$ via the values of $a_k$ at arbitrarily chosen points $x_{s', i} \in |s'|$ (one for each $(s', i) \in N_s$):
    \begin{multline}
    \label{eq:arb_points}
    \left(\partial_d (\mathfrak{s}a_k)\right)(x) =\\
    \sum_{(s', i) \in N_s} (-1)^i \mathfrak{S}(s') a_k(x_{s', i}) 
    \exp\left(-2 \pi \ii k \cdot(\alpha_s(x)-\alpha_{s'}(x_{s', i})+ t_{s',i})\right)
    \end{multline}
    To finish the proof, we shall use an appropriate choice of the points $x_{s', i}$ to show that that the right-hand side of (\ref{eq:arb_points}) is zero.
    \par
    For any $(s',i) \in N_s$, the affine simplex $\alpha_{s'}(|s'|)-t_{s', i}$ has $\alpha_s(|s|)$ as its $i\mbox{-th}$ face. Let us denote the opposite vertex of $\alpha_{s'}(|s'|)-t_{s', i}$ by $v_{s', i}$. The hyperplane of $E$ containing $\alpha_s(|s|)$ divides the set of points $\{v_{s', i} | (s', i) \in N_s \}$ in two parts, following the sign of the expression 
    \begin{equation}
    \label{eq:side}
    \Omega \cdot\left(
    v_{s', i} \wedge \left(
    \bigwedge_{j=1}^{d-1} \rho(e_{d-1, j}(s))
    \right)
    \right)
    \end{equation}
    The set $N_s$ is thus naturally split as  $N_s = N_s^+ \sqcup N_s^-$, according to the sign of (\ref{eq:side}). As follows from (\ref{eq:tsi}), the position of $v_{s', i}$ is given by the formula
    \begin{equation}
    \nonumber
    v_{s', i}= \begin{cases}
    -\rho(e_{d,1}(s')) &\mbox{if } i= 0 \\
    \rho(e_{d,i}(s')) & \text{if } i \neq 0
    \end{cases}
    \end{equation}
    To compute the sign of (\ref{eq:side}) one can use the identity
    \begin{equation}
    \nonumber
    v_{s', i} \wedge \left(
    \bigwedge_{j=1}^{d-1} \rho(e_{d-1, j}(s))
    \right)
    =
    (-1)^{i+1}\bigwedge_{j=1}^{d} \rho(e_{d, j}(s'))
    \end{equation}
    together with (\ref{eq:sign}), which gives rise to 
    \begin{equation}
    \nonumber
    N_s^\pm = \left\{
    (s',i) \in N_s \mathrel{}\middle|\mathrel{}
    (-1)^{i+1}\mathfrak{S}(s')=\pm 1
    \right\}
    \end{equation}
    Let $\Sigma_s^+$ and $\Sigma_s^-$ stand for the intersections of the affine simplices  $\alpha_{s'}(|s'|)-t_{s', i}$ for $(s', i)$ belonging to $N_s^+$ and $N_s^-$ respectively:
    \begin{equation}
    \label{eq:sigmas}
    \Sigma_s^\pm = \bigcap_{(s', i) \in N_s^\pm} \left(\alpha_{s'}(|s'|)-t_{s', i}\right)
    \end{equation}
    \begin{figure}[ht]
    	\centering
    	\includegraphics[width=0.85\linewidth]{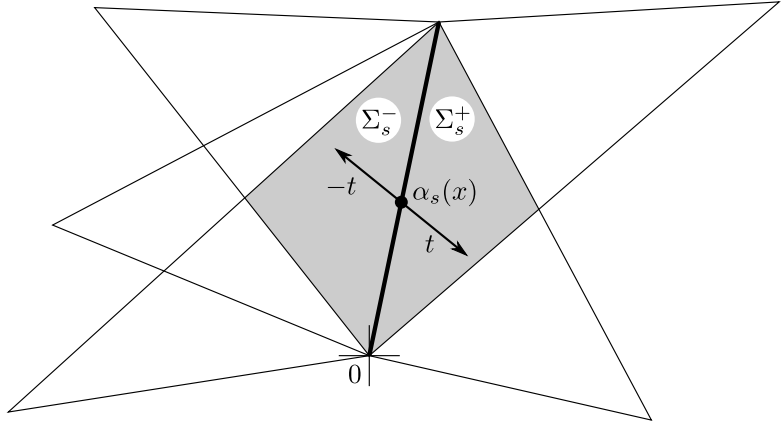}
    	\caption{The superposition of tiles (triangles outlined by thin solid lines) which may have as a face the
    		affine simplex $\alpha_s(|s|) \subset E$ of dimension $d-1$ (bold solid line). The cross-hair represents the origin of $E$. Shaded areas correspond to the open polyhedra $\Sigma_s^+$ and $\Sigma_s^-$ defined in (\ref{eq:sigmas}). Given a point $x \in |s|$, one can always choose a vector $t \in E$ in (\ref{eq:pmt}) in such a way that $\alpha_s(x) \pm t \in \Sigma_s^\pm$.}
    	\label{fig:sigmas}
    \end{figure}
    (see Figure \ref{fig:sigmas}). Both $\Sigma_s^+$ and $\Sigma_s^-$ are non-empty open polyhedra having $\alpha_s(|s|)$ as a face. Therefore, one can choose a vector $t \in E$ such that $\alpha_s(x) \pm t \in \Sigma_s^\pm$ and fix the points $x_{s', i}$ in (\ref{eq:arb_points}) in such a way such that
    \begin{equation}
    \label{eq:pmt}
    \alpha_{s'}(x_{s', i})- t_{s',i} = \alpha_s(x) \pm t \qquad\text{ for } (s', i) \in N_s^\pm
    \end{equation}
    The formula (\ref{eq:arb_points}) then yields
    \begin{equation}
    \label{eq:cancellation}
    \left(\partial_d (\mathfrak{s}a_k)\right)(x) =
    \exp\left(-2 \pi \ii k \cdot t\right) \sum_{x' \in X^-} a_k(x') -
    \exp\left(2 \pi \ii k \cdot t\right) \sum_{x' \in X^+} a_k(x')
    \end{equation}
    where 
    $$
    X^\pm := \left\{x_{s', i} \mathrel{}\middle|\mathrel{} (s', i) \in  N_s^\pm\right\}
    $$
    Note now that as follows from (\ref{eq:pmt}), $f(y)=x$ if and only if $f(y\pm t) \in X^{\pm}$  for any isometric winding $f: E \to |B|$. In other words,
    $$
    f^{-1}(X^\pm) = f^{-1}(x) \pm t
    $$
    and for any $\varphi \in \mathcal{S}(E)$
    $$
    \sum_{x' \in X^\pm} \Gamma_{x'}(\varphi) = \Gamma_x(T_{\pm t}\varphi).
    $$
    Then, as follows from (\ref{A_kx})
    \begin{equation}
    \label{eq:codim}
    \exp\left(\pm 2 \pi \ii k \cdot t\right) \sum_{x' \in X^\pm} a_k(x') = a_k(x)
    \end{equation}
    and the terms at the right-hand side of (\ref{eq:cancellation}) cancel each other, which proves (\ref{eq:cycle}).
\end{proof}
As follows from the equations (\ref{in_sheaf}) and (\ref{eq:cycle}), the functions $a_k$ restricted to the $d\mbox{-dimensional}$ simplices of $|B|$ belong to a linear space of dimension not exceeding $\rank(H_d(\mathcal{F}_{(k), \bullet}, \partial_\bullet))$. On the other hand, as can be seen from (\ref{eq:codim}), the values of $a_k$ on the simplices of dimension $d-1$ depends linearly on those on the neighboring $d\mbox{-dimensional}$ simplices. The reasoning leading to (\ref{eq:codim}) can be easily generalized to simplices of lower dimensions. Therefore, if the number of atomic positions $m$ in the model (\ref{varrho}) is big enough, the contributions of these positions in the pure point diffraction are subject to linear constraints:
\begin{corollary}
	There exist at least $m-\rank(H_d(\mathcal{F}_{(k), \bullet}, \partial_\bullet))$ linear constraints on the partial amplitudes $a_k(x_p)$.
\end{corollary}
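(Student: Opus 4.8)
The plan is to exhibit, for each fixed $k\in\mathcal{E}$, a single linear map from a space of dimension $\rank(H_d(\mathcal{F}_{(k),\bullet},\partial_\bullet))$ onto a subspace of $\C^m$ that contains the vector $(a_k(x_1),\dots,a_k(x_m))$; the announced constraints are then obtained as a basis of the annihilator of that subspace.

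First I would pin down the target space. By Proposition \ref{prop:in_F} we have $a_k\in\mathcal{F}_{(k),\bullet}$, and by Theorem \ref{one_peak} the function $g:=\mathfrak{s}a_k$ lies in $Z_d:=\ker\partial_d\subset\mathcal{F}_{(k),d}$. Since $B$ has dimension $d$, the complex $(\mathcal{F}_{(k),\bullet},\partial_\bullet)$ vanishes in degree $d+1$, so $H_d(\mathcal{F}_{(k),\bullet},\partial_\bullet)=Z_d$ and $\dim_\C Z_d=\rank(H_d(\mathcal{F}_{(k),\bullet},\partial_\bullet))$.

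The key step is to show that for every $p$ the number $a_k(x_p)$ is the value $\lambda_p(g)$ of a linear functional $\lambda_p$ on $Z_d$ that depends only on $k$ and on $x_p$, not on $a_k$. If $x_p$ lies in a $d$-simplex $s_p$, this is immediate: the basis functions $\{Y_s:s\in B_d\}$ have pairwise disjoint supports, so the coefficient $\gamma_{s_p}$ of $Y_{s_p}$ in $g$ is a linear functional of $g$, and (\ref{eq:Y_s}) together with the definition of $\mathfrak{s}$ gives $a_k(x_p)=\mathfrak{S}(s_p)\exp(-2\pi\ii k\cdot\alpha_{s_p}(x_p))\,\gamma_{s_p}$. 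If $x_p$ lies in a simplex $\tau$ of dimension $n<d$, I would run the generalization of the codimension argument from the proof of Theorem \ref{one_peak}, now in a single direction: choose a small generic vector $t\in E$, so that $\alpha_\tau(x_p)+t$ misses every cell of dimension $<d$ in each of the (finitely many, by finite local complexity) local configurations around $x_p$. Then for every isometric winding $f$ one has $f^{-1}(x_p)+t=f^{-1}(X_t)$, where $X_t$ is a fixed finite set of points, each lying in the relative interior of a $d$-simplex of $|B|$; the same computation with $\Gamma_x$ and (\ref{A_kx}) that produced (\ref{eq:codim}) yields $a_k(x_p)=\exp(2\pi\ii k\cdot t)\sum_{x'\in X_t}a_k(x')$, and invoking the previous case for each $a_k(x')$ presents $a_k(x_p)$ as $\lambda_p(g)$. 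No compatibility between different choices of $t$ is needed, since $a_k$ is a genuine function obeying the genuine relations, so any one admissible $t$ gives the correct value.

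Finally I would assemble the count. The map $L\colon Z_d\to\C^m$, $L(g):=(\lambda_1(g),\dots,\lambda_m(g))$, is linear, and $(a_k(x_1),\dots,a_k(x_m))=L(\mathfrak{s}a_k)$ lies in $\operatorname{im}L$, a subspace of $\C^m$ of dimension at most $\dim_\C Z_d=\rank(H_d(\mathcal{F}_{(k),\bullet},\partial_\bullet))$. Hence its annihilator in $(\C^m)^*$ has dimension at least $m-\rank(H_d(\mathcal{F}_{(k),\bullet},\partial_\bullet))$, and any basis of that annihilator consists of that many linearly independent linear functionals vanishing on $(a_k(x_1),\dots,a_k(x_m))$ --- the asserted constraints (the statement being vacuous when $\rank(H_d(\mathcal{F}_{(k),\bullet},\partial_\bullet))\ge m$). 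I expect the only real obstacle to be the middle step: verifying rigorously that, for a suitable small generic $t$, the identity $f^{-1}(x_p)+t=f^{-1}(X_t)$ holds for all windings simultaneously with a finite, winding-independent $X_t$ supported on $d$-simplices --- this is where finite local complexity and the fact that the cell types on the faces of a given $d$-simplex are fixed by the semi-simplicial structure must be used carefully. Everything else is linear algebra.
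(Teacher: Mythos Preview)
Your proposal is correct and follows essentially the same route as the paper: the paper's justification (the paragraph immediately preceding the Corollary) invokes Theorem~\ref{one_peak} to place $\mathfrak{s}a_k$ in $Z_d=H_d(\mathcal{F}_{(k),\bullet},\partial_\bullet)$, then cites (\ref{eq:codim}) and its stated generalization to lower-dimensional simplices to conclude that every $a_k(x_p)$ depends linearly on this cycle --- exactly your $\lambda_p$ and $L$. Your write-up is in fact more explicit than the paper's, which leaves both the linear-algebra bookkeeping and the ``easily generalized'' codimension step (the point you flag as the only real obstacle) to the reader.
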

\section{Bragg peaks densely filling a subspace}
\label{sec:subspace}
Let $V \subset E^*$ be a non-zero linear subspace of $E^*$. We are going now to establish a connection between constraints on partial amplitudes of different Bragg peaks in the case where $\mathcal{B}$ is dense in an open subset of $V$, and this is the second main result of this paper. Let us start by showing that the semi-simplicial vector spaces $\mathcal{F}_{(k)}$ for any $k \in V$ can be obtained by an appropriate extension of scalars from only one semi-simplicial module.
\par
Let us denote by $L_V \subset V^*$ the image of the group homomorphism $\nu: H_1(B, \Z) \to V^*$ given by the formula 
\begin{equation}
\label{eq:L_V}
\nu(c) := \left(k \mapsto k \cdot \rho_*(c)\right)\qquad\text{ for } c \in H_1(B, \Z)
\end{equation}
For $k \in V$, let $\phi_{k,V}$ stand for the ring homomorphism $\Z[L_V] \to \C$ extending the following character of $L_V$:
\begin{equation}
\label{eq:phi_kV}
l \mapsto \exp(-2 \pi \ii k\cdot l) \qquad \text{ for }l\in L_V
\end{equation}
and let $\phi_{k,V!}$ stand for the functor of extension of scalars by $\phi_{k,V}$.
\begin{proposition}
	\label{prop:iso}
	There exists a semi-simplicial $\Z[L_V]\mbox{-module}$ $\mathcal{G}_V$ such that for any $k \in V$ the semi-simplicial vector space $\phi_{k,V!}\mathcal{G}_V$ is isomorphic to $\mathcal{F}_{(k)}$.
\end{proposition}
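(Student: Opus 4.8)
The plan is to construct $\mathcal{G}_V$ as the ``universal'' twisted semi-simplicial module from which every $\mathcal{F}_{(k)}$, $k\in V$, is recovered by the single substitution $\phi_{k,V}$ of the monomials of $\Z[L_V]$ by phase factors; informally, $\mathcal{G}_V$ is for the family $\{\mathcal{F}_{(k)}\}$ what the cellular chains of $|B|$ with coefficients in the tautological $L_V$-local system (holonomy $\nu\colon H_1(B,\Z)\to L_V\subset\Z[L_V]^\times$) are for the cellular chains with coefficients in the unitary local systems $k\mapsto\exp(-2\pi\ii\,k\cdot\rho_*(\cdot))$. Concretely, in each degree $n$ let $\mathcal{G}_{V,n}$ be the free $\Z[L_V]$-module on symbols $\{g_s\mid s\in B_n\}$, and set $\delta_{n,i}(g_s):=\ell_{s,i}\cdot g_{\delta_{n,i}s}$ for units $\ell_{s,i}\in L_V$ yet to be fixed; for $k\in V$ I will exhibit the isomorphism $\Phi_k\colon\phi_{k,V!}\mathcal{G}_V\to\mathcal{F}_{(k)}$ of the diagonal shape $g_s\otimes 1\mapsto\lambda_s(k)\,Y_s$ with $\lambda_s(k)\in\C^\times$. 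Writing out $\Phi_k\delta_{n,i}=\delta_{n,i}\Phi_k$ on the basis $(Y_s)$ and using that $\delta_{n,i}(Y_s)=\exp(-2\pi\ii\,k\cdot t_{s,i})\,Y_{\delta_{n,i}s}$ in $\mathcal{F}_{(k)}$, this forces $\phi_{k,V}(\ell_{s,i})=\exp(-2\pi\ii\,k\cdot t_{s,i})\,\lambda_s(k)/\lambda_{\delta_{n,i}s}(k)$ for every $k$. So I shall look for $\lambda_s(k)=\exp(-2\pi\ii\,\mu_s(k))$ with $\mu_s\in V^*$ chosen so that $\bar t_{s,i}+\mu_s-\mu_{\delta_{n,i}s}$ lies in $L_V$ for all $(s,i)$, where $\bar t_{s,i}\in V^*$ is the restriction to $V$ of $k\mapsto k\cdot t_{s,i}$; the unit $\ell_{s,i}$ is then declared to be that element of $L_V$ (written additively).

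The crux is the choice of the $\mu_s$, and it is here that the definition of $L_V$ is used. Since $t_{s,i}=0$ for $i\neq 0$ and $t_{s,0}=\rho(e_{n,1}(s))$, and since the $0$-th vertex of $\delta_{n,i}s$ is the $0$-th vertex of $s$ when $i\neq 0$ and the $1$-st vertex of $s$ when $i=0$, it suffices to assign to each $0$-simplex $v$ of $B$ a ``position'' $p_v\in V^*/L_V$ with $\bar\rho(e)=p_w-p_u$ in $V^*/L_V$ for every $1$-simplex $e$ of $B$ from $u$ to $w$ — here $\bar\rho$ denotes $\rho$ followed by restriction to $V$ and projection modulo $L_V$ — and then to put $\mu_s:=\hat p_{v_0(s)}$, where $v_0(s)$ is the $0$-th vertex of $s$ and $\hat p_v\in V^*$ is a chosen lift of $p_v$ (only finitely many lifts are needed). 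Such positions exist because $\bar\rho\colon C_1(B,\Z)\to V^*/L_V$ annihilates $1$-cycles: it kills boundaries since $\rho$ does, and on $H_1(B,\Z)$ it induces $c\mapsto\nu(c)\bmod L_V$, which vanishes precisely because $L_V:=\nu(H_1(B,\Z))$. Hence $\bar\rho$ descends along $\partial_1$ to a homomorphism from the group of $0$-boundaries to $V^*/L_V$, which extends to all of $C_0(B,\Z)$ (for instance because $V^*/L_V$ is divisible, hence an injective $\Z$-module, or simply by propagating a value along a maximal forest of the $1$-skeleton of $B$); its values on the $0$-simplices are the sought $p_v$. A short check then confirms that the class of $\bar t_{s,i}+\mu_s-\mu_{\delta_{n,i}s}$ in $V^*/L_V$ is zero in both cases $i=0$ and $i\neq 0$, so that $\ell_{s,i}\in L_V$ as required.

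Two verifications remain. First, $(\mathcal{G}_{V,\bullet},\delta_{\bullet,\bullet})$ must obey the simplicial identities $\ell_{s,j}\,\ell_{\delta_{n,j}s,i}=\ell_{s,i}\,\ell_{\delta_{n,i}s,j-1}$ for $i<j$; in additive notation inside $V^*$ the $\mu$-terms telescope using $\delta_{n-1,i}\delta_{n,j}s=\delta_{n-1,j-1}\delta_{n,i}s$, so this reduces to $\bar t_{s,j}+\bar t_{\delta_{n,j}s,i}=\bar t_{s,i}+\bar t_{\delta_{n,i}s,j-1}$, which is the restriction to $V$ of the identity among the $t_{s,i}\in E$ that underlies the already-established semi-simplicial structure of $\mathcal{F}_{(k)}$ (alternatively, distinct $l\in L_V$ give distinct characters $k\mapsto\exp(-2\pi\ii\,l(k))$ of $(V,+)$, so the maps $\phi_{k,V}$, $k\in V$, jointly separate points of $\Z[L_V]$, and any identity of $\Z[L_V]$-linear maps between the finite-rank free modules $\mathcal{G}_{V,n}$ can be tested after all the $\phi_{k,V!}$, where it becomes the known identity in $\mathcal{F}_{(k)}$ transported by the graded isomorphism $\Phi_k$). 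Second, $\Phi_k$ is bijective in each degree because it carries a basis to a basis up to nonzero scalars, and it intertwines the face operators by the very relation $\phi_{k,V}(\ell_{s,i})=\exp(-2\pi\ii\,k\cdot t_{s,i})\,\lambda_s(k)/\lambda_{\delta_{n,i}s}(k)$ that dictated the choice of $\ell_{s,i}$; hence $\phi_{k,V!}\mathcal{G}_V\cong\mathcal{F}_{(k)}$ for every $k\in V$. I expect the second paragraph to be the only genuine obstacle: one must notice that a diagonal ``gauge'' $\lambda_s$ is needed at all — the naive face twist $t_{s,0}=\rho(e_{n,1}(s))$ is $\rho$ of a chain that is not a cycle, hence does not by itself define an element of $L_V$ — and that the obstruction to choosing the gauge coherently over $|B|$ is exactly the class that dies upon passing from $V^*$ to $V^*/L_V$. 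Everything downstream is bookkeeping; in particular $L_V$, being a subgroup of a real vector space, is free abelian, so $\Z[L_V]$ is a Laurent polynomial ring over $\Z$ and the extension of scalars $\phi_{k,V!}$ is unproblematic.
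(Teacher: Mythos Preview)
Your proof is correct and follows essentially the same approach as the paper: both build $\mathcal{G}_V$ as the free $\Z[L_V]$-module on $B_\bullet$ with face maps twisted by units $\ell_{s,i}\in L_V$, and both realize the isomorphism to $\mathcal{F}_{(k)}$ by diagonal phase factors attached to the $0$-th vertex of each simplex. The only cosmetic difference is that the paper produces your gauge $\mu_s$ by choosing explicit $1$-chains $c_s\in C_1(B,\Z)$ from a fixed basepoint $s_0\in B_0$ to the $0$-th vertex of $s$ (so that $\lambda_s(k)=\exp(-2\pi\ii\,k\cdot\rho(c_s))$ and $\ell_{s,i}=\nu(c_s-c_{\delta_{n,i}s}+[\,i{=}0\,]\,e_{n,1}(s))$), whereas you phrase the same step as the vanishing of the obstruction in $V^*/L_V$ followed by a lift---which is exactly the ``propagating along a maximal forest'' option you mention.
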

\begin{proof}
	We shall define $\mathcal{G}_V$ by providing an {\em ordinary} free graded $\Z[L_V]\mbox{-module}$  $\Z[L_V]^{B_\bullet}$ over the formal basis $(\epsilon_s)_{s \in B_\bullet}$ with the face homomorphisms. Let us start by fixing an element $s_0 \in B_0$ and associating with each $s \in B_\bullet$ an arbitrarily chosen $1\mbox{-chain}$ $c_s \in C_1(B, \Z)$ satisfying the condition
	\begin{equation}
	\label{eq:c_s}
	\partial c_s = \delta_{1,1}( e_{n,1} (s)) - s_0
	\end{equation}
	(this is always possible since $B$ is connected). The face homomorphisms of $\mathcal{G}_V$ are defined via the face maps $\delta_{n,i}: B_n \to B_{n-1}$ as follows:
	\begin{equation}
	\label{eq:face_op}
	\delta_{n,i}(\epsilon_s) = l_{s,i} \epsilon_{\delta_{n,i}(s)}
	\end{equation}
	where $\dim(s)=n$ and $l_{s,i} \in L_V$ is given by
	\begin{equation}
	\nonumber
	l_{s,i} :=  \begin{cases}
	\nu(c_s-c_{\delta_{n,i}s}) & \mbox{if } i \neq 0\\
	\nu(c_s-c_{\delta_{n,i}s} + e_{n,1}(s)) & \mbox{if } i=0\end{cases}
	\end{equation}
	(this expression is well-defined since in both cases the arguments of $\nu$ are cycles). To check that $\mathcal{G}_V$ is indeed a semi-simplicial $\Z[L_V]\mbox{-module}$, it remains to verify that the homomorphisms (\ref{eq:face_op}) satisfy the simplicial identity (\ref{face_identity}). This result stems from the following identity in $L_V$: 
	\begin{equation}
	\label{eq:face_lv}
	l_{\delta_{n,j}s,i} + l_{s,j} = l_{\delta_{n,i}s,j-1} + l_{s,i} \qquad \text{if } i<j
	\end{equation}
	(note that we use the multiplicative notation for the action of $l_{s,i}$ as an element of the group ring $\Z[L_V]$ in (\ref{eq:face_op}) and the additive notation for the group operation in $L_V$ in (\ref{eq:face_lv})). 
	\par
	Let $(\epsilon'_s)_{s \in B_\bullet}$ be the basis in $\phi_{k,V!}\mathcal{G}_V$ corresponding to $(\epsilon_s)_{s \in B_\bullet}$. The functorial image of the face homomorphisms is then given by the following formula:
	$$
	\phi_{k,V!}\delta_{n,i}(\epsilon'_s) = \phi_{k,V}(l_{s,i}) \epsilon'_{\delta_{n,i}(s)} =\exp(-2 \pi \ii k \cdot l_{s,i}) \epsilon'_{\delta_{n,i}(s)}.
	$$
	Consider now the bijective linear map $\omega_k:\phi_{k,V!}\mathcal{G}_V \to \mathcal{F}_{(k)}$ defined by its action on the basis $(\epsilon'_s)_{s \in B_\bullet}$:
	\begin{equation}
	\label{eq:gamma_k}
	\omega_k(\epsilon'_s) = \Phi_{k,s} Y_s,
	\end{equation}
	where $Y_s$ is defined in (\ref{eq:Y_s}) and the unitary factors $\Phi_{k,s}$ are given by
	\begin{equation}
	\label{eq:Phi_ks}
	\Phi_{k,s} := \exp(-2 \pi \ii k \cdot \rho(c_s))
	\end{equation}
	As follows from the identity
	\begin{equation}
	\nonumber
	\Phi_{k,s}\exp(-2 \pi \ii k \cdot t_{s,i}) = \Phi_{k,\delta_{n,i}s} \exp(-2 \pi \ii k \cdot l_{s,i})
	\end{equation}
	$\omega_k$ commutes with the face operators and is therefore an isomorphism of semi-simplicial vector spaces $\phi_{k,V!}\mathcal{G}_V$ and $\mathcal{F}_{(k)}$.
\end{proof}
\begin{proposition}
	\label{prop:dense}
	Let $J$ be a generating set of $d\mbox{-cycles}$ of $\mathcal{G}_V$ (which can always be chosen finite since $\Z[L_V]$ is a Noetherian ring) and let $r_{j,s} \in \Z[L_V]$ be the coefficients of the cycle $j \in J$ in the basis of $\mathcal{G}_{V,d}$:
	$$
	j = \sum_{s \in B_d} r_{j,s} \epsilon_s
	$$
	Then for almost all $k \in V$ with the possible exception of a nowhere dense subset of $V$, the space of $d\mbox{-cycles}$ of $\mathcal{F}_{(k)}$ is spanned by the set of vectors
	\begin{equation}
	\label{eq:gen_set}
	\left\{ \sum_{s\in B_\bullet} \phi_{k,V}(r_{j,s}) \Phi_{k,s} Y_s \right\}_{j \in J},
	\end{equation}
	where the phase factors $\Phi_{k,s}$ are given by (\ref{eq:Phi_ks})). 
\end{proposition}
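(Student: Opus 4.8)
The plan is to reduce the statement to a rank computation for the boundary matrix of $\mathcal{G}_V$ over the Laurent polynomial ring $R := \Z[L_V]$, and to confine the exceptional $k$ to the zero locus of a single nonzero element of $R$. Writing the differential $\partial_d : \mathcal{G}_{V,d} \to \mathcal{G}_{V,d-1}$ in the bases $(\epsilon_s)$ as a matrix $M$ over $R$, and letting $r_j := (r_{j,s})_{s\in B_d} \in R^{|B_d|}$ be the coefficient vectors of the cycles $j\in J$, the starting observation is that, by Proposition \ref{prop:iso}, the isomorphism $\omega_k$ of (\ref{eq:gamma_k}) intertwines the face operators and hence the differentials, so it carries the $d\mbox{-cycles}$ of $\phi_{k,V!}\mathcal{G}_V$ --- that is, $\ker\bigl(\phi_{k,V}(M)\bigr)\subset\C^{|B_d|}$ --- isomorphically onto the $d\mbox{-cycles}$ of $\mathcal{F}_{(k)}$, while sending the vector $\bigl(\phi_{k,V}(r_{j,s})\bigr)_s$ to $\sum_{s\in B_d} \phi_{k,V}(r_{j,s})\Phi_{k,s}Y_s$. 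The proposition therefore amounts to: for $k$ outside a nowhere dense subset of $V$, the vectors $\phi_{k,V}(r_j)$, $j\in J$, span $\ker\bigl(\phi_{k,V}(M)\bigr)$.

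First I would fix the generic dimensions over the fraction field $F := \mathrm{Frac}(R)$ (recall $R$ is a Noetherian integral domain) and set $\rho_0 := \rank_F M$. Every $(\rho_0+1)\times(\rho_0+1)$ minor of $M$ vanishes in $F$, hence in $R$, hence after applying $\phi_{k,V}$, so $\rank_\C \phi_{k,V}(M)\le \rho_0$ for every $k$; on the other hand some $\rho_0\times\rho_0$ minor of $M$ is a nonzero element $g_0\in R$, so $\dim_\C \ker\bigl(\phi_{k,V}(M)\bigr) = |B_d|-\rho_0$ as soon as $\phi_{k,V}(g_0)\ne 0$. Since localization is exact, $Z_d(\mathcal{G}_V)\otimes_R F = \ker(M\otimes_R F)$ has $F\mbox{-dimension}$ $|B_d|-\rho_0$; as the $r_j$ generate $Z_d(\mathcal{G}_V)$ over $R$, they span this space over $F$, so the matrix having the $r_j$ as its rows has $F\mbox{-rank}$ $|B_d|-\rho_0$ and one of its $(|B_d|-\rho_0)\times(|B_d|-\rho_0)$ minors is a nonzero element $g_1\in R$. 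Now for any $k$ with $\phi_{k,V}(g_0 g_1)\ne 0$ (and $g_0 g_1\ne 0$ because $R$ is a domain) the vectors $\phi_{k,V}(r_j)$ span a subspace of $\C^{|B_d|}$ of dimension at least $|B_d|-\rho_0$; this subspace is contained in $\ker\bigl(\phi_{k,V}(M)\bigr)$, since $M r_j = 0$ over $R$ forces $\phi_{k,V}(M)\,\phi_{k,V}(r_j) = 0$; and $\ker\bigl(\phi_{k,V}(M)\bigr)$ has dimension exactly $|B_d|-\rho_0$. Hence equality, which --- transported by $\omega_k$ --- is precisely the assertion that (\ref{eq:gen_set}) spans the $d\mbox{-cycles}$ of $\mathcal{F}_{(k)}$.

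To finish I would show that $\{k\in V: \phi_{k,V}(g)=0\}$ is nowhere dense for every nonzero $g=\sum_l c_l[l]\in R$, and apply this to $g=g_0 g_1$. The function $k\mapsto\phi_{k,V}(g)=\sum_l c_l\exp(-2\pi\ii\,l(k))$ on the connected real vector space $V$ is a nontrivial finite linear combination of the characters $k\mapsto\exp(-2\pi\ii\,l(k))$; these characters are pairwise distinct, because $l(k)\in\Z$ for all $k\in V$ forces $l=0$, and distinct characters of an abelian group are linearly independent, so $\phi_{\cdot,V}(g)$ is not identically zero. Being moreover real-analytic on $V$ (the restriction of an entire function), its zero set has empty interior, hence is nowhere dense and of Lebesgue measure zero.

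The conceptual obstacle I expect is the fact that $\phi_{k,V!}$ is only right exact, so that a priori $Z_d(\phi_{k,V!}\mathcal{G}_V)$ may strictly contain the image of $\phi_{k,V!}Z_d(\mathcal{G}_V)$ --- the gap being measured by a $\mathrm{Tor}^R_1$ term between the cokernel of $\partial_d$ and $R/\ker\phi_{k,V}$ --- and the content of the proposition is exactly that this gap disappears for generic $k$. Turning ``generic'' into the complement of the zero locus of one nonzero Laurent polynomial is what the minor bookkeeping over $F$ above accomplishes; the only other point requiring care is the elementary check that distinct $l\in L_V\subset V^*$ yield distinct characters of $V$, which is what prevents a nonzero element of $\Z[L_V]$ from specializing to the zero function.
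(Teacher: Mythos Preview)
Your proof is correct and follows essentially the same approach as the paper. The paper packages the rank--specialization argument into a separate technical proposition (Proposition~\ref{prop:rank}), applied once to $\partial_d$ and once to the map $\varkappa:\Z[L_V]^{(J)}\to\mathcal{G}_{V,d}$ sending basis elements to the generating cycles; your $g_0$ and $g_1$ are exactly the nonvanishing minors witnessing $\rank(\partial_d)$ and $\rank(\varkappa)$ respectively, and your nowhere-dense argument via real-analyticity and linear independence of characters is the same as the paper's.
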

\begin{corollary}
	\label{cor:smooth}
	For any linear subspace $V \subset E^*$, there exists a finite set of smooth functions $\mathfrak{a}_j: V \to \C^m$ indexed by a generating set of the $d\mbox{-cycles}$ of $\mathcal{G}_V$, such that for almost all 
	$k \in \mathcal{B} \cap V$ (with the possible exception of a subset nowhere dense in $V$), the vector $(a_k(x_1), \dots,a_k(x_m)) \in \C^m$ of partial diffraction amplitudes defined in Proposition \ref{def_a_k} belongs to a subspace spanned by $\mathfrak{a}_j(k)$. 
\end{corollary}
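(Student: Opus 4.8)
The plan is to combine the top-degree constraint of Theorem~\ref{one_peak} with the description of the $d$-cycles of $\mathcal{F}_{(k)}$ furnished by Proposition~\ref{prop:dense}, and then to transport the information from the $d$-dimensional tiles to all of $|B|$ by means of the low-codimension phase relations announced in the remark following Theorem~\ref{one_peak}. Throughout, the exceptional nowhere dense subset of $V$ in the statement of the Corollary will be the set $N\subset V$ produced by Proposition~\ref{prop:dense}.

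First I would recall that $\mathcal{B}\subset\mathcal{E}$ (the pure point part of the diffraction is contained in the pure point part of the dynamical spectrum, as noted after~(\ref{eq:intertwine})), so that for every $k\in\mathcal{B}\cap V$ the function $a_k$ is defined and, by Proposition~\ref{prop:in_F}, belongs to $\mathcal{F}_{(k),\bullet}$; write $a_k=\sum_{s\in B_\bullet}\lambda_s(k)\,Y_s$. By Theorem~\ref{one_peak}, $\mathfrak{s}a_k$ is a $d$-cycle of $\mathcal{F}_{(k)}$, and its $Y_s$-coefficient for $s\in B_d$ equals $\mathfrak{S}(s)\lambda_s(k)$. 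Fixing once and for all a $\Z[L_V]$-module $\mathcal{G}_V$, a finite generating set $J$ of its $d$-cycles and the associated coefficients $r_{j,s}\in\Z[L_V]$ as in Propositions~\ref{prop:iso}--\ref{prop:dense}, Proposition~\ref{prop:dense} says that for $k\in V\setminus N$ the space of $d$-cycles of $\mathcal{F}_{(k)}$ is spanned by the vectors $z_j(k):=\sum_{s\in B_d}\phi_{k,V}(r_{j,s})\,\Phi_{k,s}\,Y_s$, $j\in J$. Hence for $k\in(\mathcal{B}\cap V)\setminus N$ we may write $\mathfrak{s}a_k=\sum_{j\in J}\mu_j(k)\,z_j(k)$ for suitable (not necessarily unique) scalars $\mu_j(k)$.

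The core of the argument is the construction of a single linear operator $R_k$, depending only on $k$ and on the FBS-data $(B,E,\rho)$ and not on the winding $f_0$, that reconstructs the full coefficient vector $(\lambda_s)_{s\in B_\bullet}$ of a partial amplitude from its top-degree part. On $B_d$ one sets $(R_k z)_s:=\mathfrak{S}(s)\,z_s$; on $B_{d-1}$ one sets $(R_k z)_s:=\sum_{(s',i)\in N_s^+}(R_k z)_{s'}\exp(-2\pi\ii k\cdot t_{s',i})$, which is precisely the relation obtained from~(\ref{eq:codim}) after substituting $Y_s(x)=\exp(-2\pi\ii k\cdot\alpha_s(x))$ and cancelling the common nonzero factor $\exp(-2\pi\ii k\cdot\alpha_s(x))$; on simplices of higher codimension one iterates the analogous relations supplied by the remark after Theorem~\ref{one_peak}. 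By that remark and the argument of Theorem~\ref{one_peak}, applying $R_k$ to $\mathfrak{s}a_k$ returns the genuine coefficients $(\lambda_s(k))_{s\in B_\bullet}$ of $a_k$. Since $R_k$ is linear, $\lambda_s(k)=\sum_{j\in J}\mu_j(k)\,(R_k z_j(k))_s$ for every $s\in B_\bullet$ and every $k\in(\mathcal{B}\cap V)\setminus N$. Each $x_p$ lies in the interior of a unique simplex $|s_p|\subset|B|$, so $a_k(x_p)=\lambda_{s_p}(k)\exp(-2\pi\ii k\cdot\alpha_{s_p}(x_p))$; defining $\mathfrak{a}_j\colon V\to\C^m$ by $(\mathfrak{a}_j(k))_p:=(R_k z_j(k))_{s_p}\exp(-2\pi\ii k\cdot\alpha_{s_p}(x_p))$ one obtains $(a_k(x_1),\dots,a_k(x_m))=\sum_{j\in J}\mu_j(k)\,\mathfrak{a}_j(k)$, so this vector lies in the span of $\{\mathfrak{a}_j(k)\}_{j\in J}$ for all $k\in\mathcal{B}\cap V$ outside the nowhere dense set $N$. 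Smoothness of the $\mathfrak{a}_j$ is then automatic: $\phi_{k,V}(r_{j,s})$ and $\Phi_{k,s}=\exp(-2\pi\ii k\cdot\rho(c_s))$ are finite trigonometric sums in $k$, $R_k$ is assembled from such sums, and $\exp(-2\pi\ii k\cdot\alpha_{s_p}(x_p))$ is entire in $k$, so each component of $\mathfrak{a}_j$ is a finite trigonometric polynomial on $V$, in particular $C^\infty$.

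The delicate point --- the one I would treat with care --- is the definition of $R_k$ on simplices of codimension at least two, that is, making rigorous the remark after Theorem~\ref{one_peak}: one must show that the value of a partial amplitude on a lower-dimensional tile is forced, via phase relations with trigonometric-polynomial coefficients in $k$, by its values on the adjacent tiles of one dimension higher, and that iterating these relations down from $B_d$ is consistent. This amounts to rerunning the geometric ``wiggling'' argument of Theorem~\ref{one_peak} inside the star of a codimension-$n$ simplex, where the combinatorics of neighbouring tiles (the analogue of the decomposition $N_s=N_s^+\sqcup N_s^-$ and of the polyhedra $\Sigma_s^\pm$) is more involved. If one restricts to decorations $x_p$ lying in the interiors of $d$-dimensional tiles --- the generic and physically relevant case --- this difficulty disappears, since then $R_k$ is only needed on $B_d$, where it is the trivial rescaling by $\mathfrak{S}$; everything else in the argument is bookkeeping with finite trigonometric sums.
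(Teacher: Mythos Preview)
Your proposal is correct and follows the same logic as the paper, which in fact does not give a separate proof of Corollary~\ref{cor:smooth} at all: the paper treats it as an immediate consequence of Theorem~\ref{one_peak} (that $\mathfrak{s}a_k$ is a $d$-cycle of $\mathcal{F}_{(k)}$) combined with Proposition~\ref{prop:dense} (that for $k$ outside a nowhere dense set the $d$-cycles of $\mathcal{F}_{(k)}$ are spanned by the explicit vectors~(\ref{eq:gen_set})), and then simply records the resulting explicit trigonometric form in~(\ref{eq:trig_sum}). Your write-up is more detailed than the paper's on two points. First, you make explicit the passage from the $Y_s$-expansion of $\mathfrak{s}a_k$ to the values $a_k(x_p)$, which the paper leaves tacit. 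Second, you introduce the reconstruction operator $R_k$ to handle decorations $x_p$ lying on simplices of dimension less than $d$; the paper addresses this only via the remark after Theorem~\ref{one_peak} (``the reasoning leading to~(\ref{eq:codim}) can be easily generalized to simplices of lower dimensions'') and in the examples places all decorations in the interiors of top-dimensional tiles. Your honest flagging of the higher-codimension step as the delicate point is appropriate; the paper glosses over exactly the same issue.
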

As follows from (\ref{eq:Y_s}), (\ref{eq:Phi_ks}) and (\ref{eq:gen_set}), the components of $\mathfrak{a}_j$ are finite exponential sums of the form
\begin{equation}
\label{eq:trig_sum}
\mathfrak{a}_{j,p}(k) = \sum_{l \in X_{j,p}} C_{lj,p}\exp\left(-2\pi \ii k \cdot(l + y_p)\right),
\end{equation}
where $X_{j,p}$ is a finite subset of $L_V$, $C_{lj,p} \in \Z$ and $y_p \in V^*$. It is remarkable that while the dependence of the diffraction amplitudes on $k\in \mathcal{B} \cap V$ is usually utterly irregular, the constraints (\ref{eq:trig_sum}) for the partial amplitudes are smooth functions of $k$. Note, however, that these constraints are effective only if the Bragg peaks fill densely an open subset of $V$ and when the number $m$ of distinct atomic sites in (\ref{varrho}) is large enough (e.g. if $m$ exceeds the number of generators of the $d\mbox{-cycles}$ of $\mathcal{G}_V$).
\par
To prove Proposition \ref{prop:dense} we shall need the following technical result:
\begin{proposition}
	\label{prop:rank}
	If $\vartheta$ is a homomorphism of free $\Z[L_V]\mbox{-modules}$, then for all $k \in V$
	$$
	\rank\left( \phi_{k,V!}\vartheta \right) \le \rank(\vartheta).
	$$
	Moreover, the subset $\mathcal{V}_\vartheta \subset V$ for which this inequality is strict
	$$
	\mathcal{V}_\vartheta :=\left\{k \in  V : \rank\left( \phi_{k,V!}\vartheta \right) < \rank(\vartheta) \right\}
	$$
	is nowhere dense in $V$.
\end{proposition}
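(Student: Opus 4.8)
\textbf{Proof proposal for Proposition \ref{prop:rank}.}

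The plan is to reduce the statement to a single determinantal condition on the entries of a matrix representing $\vartheta$. Fix bases of the two free $\Z[L_V]$-modules, so that $\vartheta$ is given by a matrix $M$ with entries in $\Z[L_V]$. Extension of scalars by the ring homomorphism $\phi_{k,V}:\Z[L_V]\to\C$ turns $M$ into the complex matrix $\phi_{k,V}(M)$, whose entries are obtained by applying $\phi_{k,V}$ entrywise; by definition $\rank(\phi_{k,V!}\vartheta)$ equals the rank of $\phi_{k,V}(M)$ over $\C$. Write $r:=\rank(\vartheta)$, meaning the largest size of a square submatrix of $M$ whose determinant is a nonzero element of $\Z[L_V]$ (here we use that $\Z[L_V]$ is an integral domain, being the group ring of a torsion-free abelian group over $\Z$; $L_V\subset V^*$ is torsion-free). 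Since $\phi_{k,V}$ is a ring homomorphism, it commutes with taking minors: every $(r+1)\times(r+1)$ minor of $M$ vanishes in $\Z[L_V]$, hence every $(r+1)\times(r+1)$ minor of $\phi_{k,V}(M)$ vanishes in $\C$. This gives $\rank(\phi_{k,V!}\vartheta)\le r$ for all $k\in V$, which is the first assertion.

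For the second assertion, the point is that the strict inequality $\rank(\phi_{k,V}(M))<r$ holds precisely when \emph{all} $r\times r$ minors of $\phi_{k,V}(M)$ vanish. Fix one $r\times r$ submatrix whose determinant $\Delta\in\Z[L_V]$ is nonzero; then $\mathcal{V}_\vartheta$ is contained in the zero set of the single function $k\mapsto\phi_{k,V}(\Delta)$ on $V$. Now $\Delta$ is a nonzero element of the group ring, so it is a finite nonzero $\Z$-linear combination $\Delta=\sum_{l}n_l\,[l]$ of group elements $l\in L_V$, and therefore $\phi_{k,V}(\Delta)=\sum_l n_l\exp(-2\pi\ii\,k\cdot l)$ is a nontrivial finite exponential sum, i.e.\ a nonzero real-analytic (indeed entire, after complexification) function of $k\in V$. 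The zero set of a nonzero real-analytic function on the finite-dimensional real vector space $V$ has empty interior and is closed, hence is nowhere dense; consequently $\mathcal{V}_\vartheta$, being a subset of such a set, is nowhere dense in $V$.

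The one place that needs a little care — and the main (mild) obstacle — is justifying that $\rank(\vartheta)$, defined via minors over $\Z[L_V]$, is the ``right'' notion and that it is nonzero-minor-detected: this requires that $\Z[L_V]$ be an integral domain, which follows from $L_V$ being a finitely generated torsion-free abelian group (a subgroup of the $\Q$-vector space $V^*$), so $\Z[L_V]$ embeds in the Laurent polynomial ring $\Z[t_1^{\pm1},\dots,t_r^{\pm1}]$ and hence in its fraction field. With $\Z[L_V]$ a domain, the two characterizations of rank (size of largest nonvanishing minor / rank over the fraction field) agree, and passing to $\C$ via the ring homomorphism $\phi_{k,V}$ preserves the minor identities as used above. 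No deeper input is needed; the remaining steps are the routine verifications sketched here.
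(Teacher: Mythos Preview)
Your proof is correct and follows essentially the same route as the paper's: both represent $\vartheta$ by a matrix over the integral domain $\Z[L_V]$, use that ring homomorphisms preserve minors to get the inequality, and then argue that $\mathcal{V}_\vartheta$ lies in the zero set of the function $k\mapsto\phi_{k,V}(\Delta)$ for a fixed nonzero $r\times r$ minor $\Delta$, which is a nonzero analytic function by linear independence of the characters $k\mapsto\exp(-2\pi\ii k\cdot l)$. The only cosmetic differences are that the paper cites \cite{lam2009exercises} for the minor characterization of rank and phrases the final step as a contradiction via complexification, whereas you invoke directly that the zero set of a nonzero real-analytic function is closed with empty interior; both are equivalent here.
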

\begin{proof}
	Let $\rank(\vartheta)=r$. Since $\Z[L_V]$ is a commutative domain, the rank of a homomorphism of free modules coincides with that of its matrix over the quotient field of $\Z[L_V]$ \cite[Ex. 5.23A]{lam2009exercises}. Hence, all minors of order $r+1$ of the matrix of $\vartheta$ (if any) are zero, and there exists a non-zero minor of order $r$. All these minors are elements of $\Z[L_V]$ and the corresponding minors of $\phi_{k,V!}\vartheta$ are obtained from them by the ring homomorphism $\phi_{k,V}$. Therefore, $\rank\left( \phi_{k,V!}\vartheta \right) \le r$ for any $k \in V$. Let $0 \neq P \in \Z[L_V]$ be a non-vanishing minor of $\vartheta$ of order $r$. Then the expression
	$$
	k \mapsto \phi_{k,V}(P)
	$$
	defines an entire analytic function on $V \otimes_{\R} \C$ vanishing on $\mathcal{V}_\vartheta$ and thus also on the closure of $\mathcal{V}_\vartheta$ in $V$. If this closure contains an open subset of $V$, this function is zero. Since the set of functions $V \to \C$
	$$
	\left\{ k \mapsto \exp(-2 \pi \ii k\cdot l) \mid l\in L_V\right\}
	$$
	is linearly independent, $P=0$. This contradiction proves the Proposition.
\end{proof}
\begin{proof}[Proof of Proposition \ref{prop:dense}]
	Consider the homomorphism $\varkappa: \Z[L_V]^{(J)} \to \mathcal{G}_{V, d}$ mapping each basis element of the free $\Z[L_V]\mbox{-module}$ over the set $J$ to the corresponding element of $J \subset \mathcal{G}_{V, d}$. Since $J$ is the generating set of the submodule of $d\mbox{-cycles}$, the following sequence of free $\Z[L_V]\mbox{-modules}$
    \begin{equation}
    \label{eq:exact}
    \xymatrix{
	    \Z[L_V]^{(J)} \ar[r]^-\varkappa & \mathcal{G}_{V, d} \ar[r]^-{\partial_d} & \mathcal{G}_{V, d-1}
    }
    \end{equation}
    is exact. Applying the extension of scalars $\phi_{k,V!}$ to (\ref{eq:exact}) yields the upper row of the following diagram (which is commutative by Proposition \ref{prop:iso}):
    \begin{equation}
    \label{eq:diagram}
    \xymatrix@!C{
    	\C^{(J)} \ar@{=}[d] \ar[r]^-{\phi_{k,V!}\varkappa} & \phi_{k,V!}\mathcal{G}_{V, d} \ar[d]^{\omega_k} \ar[r]^-{\phi_{k,V!}\partial_d} & \phi_{k,V!}\mathcal{G}_{V, d-1} \ar[d]^{\omega_k}\\
    	\C^{(J)} \ar[r]^-{\omega_k \circ (\phi_{k,V!}\varkappa)} & \mathcal{F}_{(k), d} \ar[r]^-{\partial_d} & \mathcal{F}_{(k), d-1}
    }
    \end{equation}
    Since the sequence (\ref{eq:exact}) remains exact when extended to the vector spaces over the quotient field, one has 
	$$
	\rank(\varkappa)+\rank(\partial_d) = \#B_d.
	$$ 
	Then by Proposition \ref{prop:rank}, for all $k \in V \backslash \left( \mathcal{V}_\varkappa \cup \mathcal{V}_{\partial_d} \right)$ holds the equality
	$$
	\rank\left(\phi_{k,V!}\varkappa\right)+\rank\left(\phi_{k,V!}\partial_d\right) = \#B_d.
	$$
	and thus the rows of (\ref{eq:diagram}) are also exact. As follows from (\ref{eq:gamma_k}), the vectors of the set (\ref{eq:gen_set}) are images of the the basis vectors of $\C^{(J)}$ in the bottom row of (\ref{eq:diagram}). Therefore, for all $k \in V \backslash \left( \mathcal{V}_\varkappa \cup \mathcal{V}_{\partial_d} \right)$ the subspace of $d\mbox{-cycles}$ in $\mathcal{F}_{(k)}$ is spanned by (\ref{eq:gen_set}). Since $\mathcal{V}_\varkappa \cup \mathcal{V}_{\partial_d}$ is nowhere dense in $V$, this proves the Proposition.
\end{proof}
\par
In the physically relevant case where $V=E^*$ (and therefore $L_V$ can be considered as a subgroup of $E$), is possible to give a geometric meaning to formula (\ref{eq:trig_sum}) in the following way. Since $L_V$ is a quotient of $H_1(B, \Z)$ by the kernel of (\ref{eq:L_V}), there exists a normal semi-simplicial covering $\widetilde{B}_V \to B$ for which $L_V$ is the group of deck transformations. The corresponding action of $L_V$ on $\Z^{(\widetilde{B}_V)}$ commutes with the face homomorphisms and therefore defines on $\Z^{(\widetilde{B}_V)}$ the structure of a semi-simplicial $\Z[L_V]\mbox{-module}$. It is straightforward to check using formula (\ref{eq:face_op}) that this module is isomorphic to $\mathcal{G}_V$. Hence, the $d\mbox{-chains}$ of $\Z^{(\widetilde{B}_V)}$ can be seen as formal finite integer linear combinations of copies of prototiles translated by vectors of $L_V$. The $d\mbox{-cycles}$ of $\Z^{(\widetilde{B}_V)}$ (and thus also those of $\mathcal{G}_V$) then correspond to the combinations with boundaries canceling out. The advantage of this approach is that it can be directly applied to tilings with tiles of arbitrary shapes, without preliminary triangulation, as illustrated by Figure \ref{fig:flip}. This representation can be used directly to calculate the sum at the right hand side of (\ref{eq:trig_sum}). Assuming that Figure \ref{fig:flip} depicts the cycle $j$, and the decoration corresponds to the point $x_p$, the vectors $l+y_p \in E$ in (\ref{eq:trig_sum}) are given by the positions of the copies of the decorating point, and the coefficients $C_{lj,p}$ are the weights of the corresponding tiles ($+1$ and $-1$ for the case shown on Figure \ref{fig:flip}). 
\begin{figure}[h]
	\centering
	\includegraphics[width=0.8\linewidth]{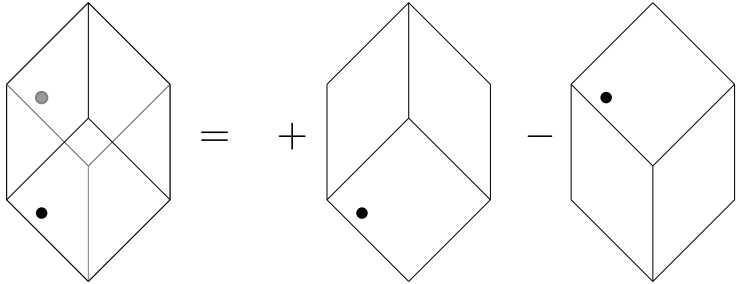}
	\caption{A metaphorical representation of an integer linear combination of translated prototiles corresponding to a $d\mbox{-cycle}$ of $\Z^{(\widetilde{B}_V)}$, for the case of the tiling of plane by squares and 45 degrees rhombi. The solid circles represent the decoration of the square tile.}
	\label{fig:flip}
\end{figure}
\section{Examples}
\label{sec:examples}
\subsection{Binary patterns in one dimension}
One-dimensional binary patterns are decorated tilings of the real line by two types of intervals. Let $v_1$ and $v_2$ stand for the length of the intervals and let $u_1$ and $u_2$ be the positions of the decorating points (relative to the left end of the respective intervals). The diffracting distribution (\ref{eq:density}) of a binary pattern is defined by assigning complex weights $w_1$ and $w_2$ to the respective decorations (see Figure \ref{fig:binary_seq}).  
\begin{figure}[h]
	\centering
	\includegraphics[width=0.6\linewidth]{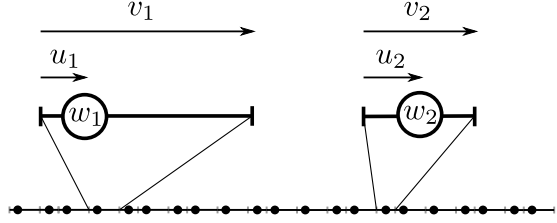}
	\caption{One-dimensional binary pattern.}
	\label{fig:binary_seq}
\end{figure}
Let us now describe the FBS-complex $(B, E, \rho)$ encoding the binary patterns. The graded set $B_\bullet$ contains three elements:
\begin{equation}
\nonumber
\begin{matrix}
B_0 &=& \{s_0\}\\
B_1 &=& \{s_1, s_2\}
\end{matrix}
\end{equation} 
with the face maps given by
\begin{equation}
\nonumber
\delta_{1,0}s_i = \delta_{1,1}s_i = s_0 \qquad \text{ for } i\in\{1,2\}.
\end{equation}
Therefore, the geometric realization $|B|$ of the semi-simplicial set $B$ is a bouquet of two circles (see Figure \ref{fig:fbs_bouquet}). The group of $1\mbox{-cycles}$ of $C_\bullet(B, \Z)$ is generated by the  one-dimensional simplices $s_1$ and $s_2$. Finally, the homomorphism $\rho: H_1(B, \Z) \to E$ is defined by the formula
\begin{equation}
\nonumber
\rho(s_i) = v_i\qquad  \text{ for } i\in\{1,2\}
\end{equation}
(recall that in this case $E$ is a real line).
\par
\begin{figure}[ht]
	\centering
	\includegraphics[width=0.7\linewidth]{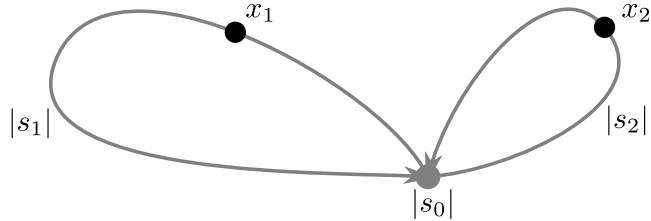}
	\caption{The geometric realization of the FBS-complex of a binary pattern and its decorations.}
	\label{fig:fbs_bouquet}
\end{figure}
Let us now assume that the set of Bragg peaks of the considered binary pattern is dense in $E^*$. With the notation used in of Section \ref{sec:subspace}, this amounts to the assumption that $V=E^*$ with $V^*$ naturally isomorphic to $E$. Therefore, the group $L_V$ (see (\ref{eq:L_V})) is generated by $v_1$ and $v_2$. 
\par
Let us first consider the case when $v_1$ and $v_2$ are commensurate, that is there exists $v \in E$ such that
\begin{equation}
\nonumber
v_i=n_i v
\end{equation}
where $n_1, n_2 \in \Z$ are coprime. In this case $L_V$ is a free abelian group of rank 1 generated by $v$. It is convenient to write the group operation in multiplicative notation, and treat the group ring $\Z[L_V]$ as a ring of Laurent polynomials:
\begin{equation}
\nonumber
\Z[L_V] = \Z[\xi, \xi^{-1}]
\end{equation}
where the multiplication by the indeterminate $\xi$ corresponds to the action of $v \in L_V$. Since $B$ contains only one vertex, one can set in (\ref{eq:c_s}) $c_s=0$ for all $s \in B_\bullet$. Then the face homomorphisms (\ref{eq:face_op}) of $\mathcal{G}_V$ take the following form
\begin{equation}
\nonumber
\begin{matrix}
\delta_{1,0}(\epsilon_{s_i}) &=& \xi^{n_i} \epsilon_{s_0}\\
\delta_{1,1}(\epsilon_{s_i}) &=& \epsilon_{s_0}
\end{matrix}\qquad \text{ for } i\in\{1,2\}
\end{equation}
The generating set $J$ of Proposition \ref{prop:dense} contains only one cycle with the coefficients
\begin{equation}
\nonumber
\begin{matrix}
r_{s_1} &=& \xi^{n_2} - 1 \\
r_{s_2} &=& 1 - \xi^{n_1} 
\end{matrix}
\end{equation}
Therefore, the partial amplitudes $a_k(x_1)$ and  $a_k(x_2)$ for all Bragg peaks with the exception of a nowhere dense set obey the following constraint:
\begin{equation}
\nonumber
\frac{a_k(x_1)}{a_k(x_2)} = 
\exp\left(2 \pi \ii k (u_2-u_1)\right)\frac
{\exp(-2\pi \ii k v n_2) - 1}
{1- \exp(-2\pi \ii k v n_1)}, 
\end{equation}
valid for $k \notin v^{-1}\Z$. In the case $n_1=n_2=1$ and $u_1=u_2$ this amounts to $a_k(x_1)= -a_k(x_2)$, which reflects the trivial fact that for $w_1=w_2$ in this case one recovers the periodic Dirac comb.
\par
When $v_1$ and $v_2$ are incommensurate, $L_V$ is a free abelian group of rank 2 generated by $v_1$ and $v_2$. Again, we shall use the multiplicative notations
\begin{equation}
\nonumber
\Z[L_V] = \Z[\xi_1, \xi_1^{-1}, \xi_2, \xi_2^{-1}]
\end{equation}
with the multiplication by the indeterminate $\xi_i$ corresponding to the action of $v_i$. The face homomorphisms (\ref{eq:face_op}) are then given by
\begin{equation}
\nonumber
\begin{matrix}
\delta_{1,0}(\epsilon_{s_i}) &=& \xi_i \epsilon_{s_0}\\
\delta_{1,1}(\epsilon_{s_i}) &=& \epsilon_{s_0}
\end{matrix}\qquad \text{ for } i\in\{1,2\}
\end{equation}
and again, $J$ contains a single cycle  with the coefficients
\begin{equation}
\nonumber
\begin{matrix}
r_{s_1} &=& \xi_2 - 1 \\
r_{s_2} &=& 1 - \xi_1 
\end{matrix}
\end{equation}
leading to the following constraint
\begin{equation}
\nonumber
\frac{a_k(x_1)}{a_k(x_2)} = 
\exp\left(2 \pi \ii k (u_2-u_1)\right)\frac
{\exp(-2\pi \ii k v_2) - 1}
{1- \exp(-2\pi \ii k v_1)},
\end{equation}
valid for $k \neq 0$.
\subsection{Decorated canonical tilings}
Let us fix $n$ vectors $v_1,\dots,v_n \in E$, such that any $d$ of them are linearly independent over $\R$. The prototiles of an $(n,d)\mbox{-canonical}$ tiling \cite{bodini2010crystallization} of $E$ are parallelotopes with edges $\{v_{i_1},\dots,v_{i_d}\}$ (one prototile for every subset $\{i_1,\dots,i_d\}\subset \{1,\dots,n\}$). For the sake of simplicity we shall limit the consideration to the case when $v_1,\dots,v_n$ are linearly independent over $\Q$ and shall also assume that the Bragg peaks are dense in $E^*$.
\par
The results of this paper are not directly applicable to canonical tilings for $d>1$ since the prototiles are not simplices. However, as a $d\mbox{-dimensional}$ parallelotope can be straightforwardly triangulated  by $d!$ simplices, we shall tacitly assume such triangulation applied to every tile. As follows from (\ref{in_sheaf}), partial diffraction amplitudes at the points belonging to the same simplex are related by a trivial phase factor. Since the triangulation of a prototile is purely formal, the same applies to the points belonging to the same prototile of the canonical tiling. To keep focus on the non-trivial constraints only, we shall therefore consider only the case when each prototile is decorated by a single point at its center (see Figure \ref{fig:canonical}). 
\begin{figure}[h]
	\begin{subfigure}[b]{0.6\textwidth}
		\includegraphics[width=\textwidth]{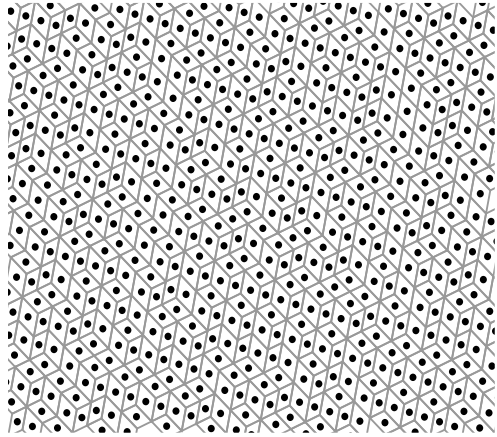}
		\caption{Solid circles represent the decorations, gray lines are the tile boundaries.}
		\label{fig:canonical_1}
	\end{subfigure}\qquad
	\begin{subfigure}[b]{0.28\textwidth}
		\includegraphics[width=\textwidth]{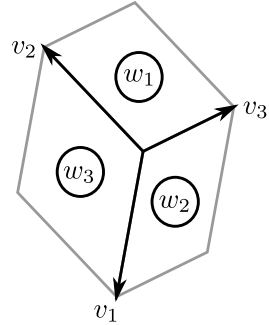}
		\caption{The decorations are positioned at the centers of tiles.}
		\label{fig:canonical_2}
	\end{subfigure}
    \caption{A decorated $(3,2)\mbox{-canonical}$ tiling.}
    \label{fig:canonical}
\end{figure}
\par
Similarly to the previous example, we assume $V=E^*$. The group $L_V \subset E \cong V^*$ is then freely generated by $v_1,\dots,v_n$ and we shall use the multiplicative notation for the ring $\Z[L_V]$:
\begin{equation}
\nonumber
\Z[L_V] \cong \Z[\xi_1, \xi_1^{-1},\dots,\xi_n, \xi_n^{-1}]
\end{equation}
\par
The space $|B|$ of the FBS-complex of an $(n,d)\mbox{-canonical}$ tiling is a (triangulated) $d\mbox{-skeleton}$ of the standard CW-decomposition of an $n\mbox{-dimensional}$ torus $\T^n$. We can use this fact to calculate the homology of the chain complex $(\mathcal{G}_{V, \bullet}, \partial_\bullet)$ in the following way. For $1\le i \le n$, let $(M_{i,\bullet}, \partial_\bullet)$ stand for the chain complex of free $\Z[\xi_i, \xi_i^{-1}]\mbox{-modules}$ of rank 1:
\begin{equation}
\label{eq:circle}
\xymatrix{
	0 \ar[r] & M_{i, 1} \ar[r]^{\xi_i -1} & M_{i,0} \ar[r] & 0
}
\end{equation}
The complex $(M_{i,\bullet}, \partial_\bullet)$ is isomorphic to the complex of submodules of $\mathcal{G}_V$ corresponding to the the edge $v_i$ and the unique vertex of $|B|$. The CW-decomposition of the entire torus then corresponds to the tensor product of $(M_{i,\bullet}, \partial_\bullet)$ over $\Z$ for $i=1\dots n$:
\begin{equation}
\label{eq:complex_M}
(M_\bullet, \partial_\bullet)=\sideset{}{_{\Z}}\bigotimes_{i=1}^n (M_{i,\bullet}, \partial_\bullet)
\end{equation}
Note that (\ref{eq:complex_M}) is naturally a complex of modules over the ring
\begin{equation}
\label{eq:ring}
\sideset{}{_{\Z}}\bigotimes_{i=1}^n \Z[\xi_i, \xi_i^{-1}] \cong \Z[\xi_1, \xi_1^{-1},\dots,\xi_n, \xi_n^{-1}]
\end{equation} 
The $d\mbox{-skeleton}$ of the torus is described by the truncated chain complex
\begin{equation}
\label{eq:truncated}
\xymatrix{
	0 \ar[r] & M_d \ar[r]^{\partial_d} & M_{d-1} \ar[r]^{\partial_{d-1}} & \dots \ar[r]^{\partial_{1}} & M_0 \ar[r] & 0
}
\end{equation}
The triangulation of the prototiles of the canonical tiling thus yields a chain quasi-isomorphism (see e.g. \cite[Chapter~1.1]{weibel1995introduction}) of (\ref{eq:truncated}) to $(\mathcal{G}_V, \partial_\bullet)$. Since the chain complex (\ref{eq:circle}) is acyclic, so is $(M_\bullet, \partial_\bullet)$ and the $d\mbox{-cycles}$ of the truncated complex (\ref{eq:truncated}) are precisely the boundaries of the $(d+1)\mbox{-chains}$ of $(M_\bullet, \partial_\bullet)$. Therefore, the module of $d\mbox{-cycles}$ of $\mathcal{G}_V$ is a free $\Z[L_V]\mbox{-module}$ of rank $\binom{n}{d+1}$. Taking into account that an $(n,d)\mbox{-canonical}$ tiling has $\binom{n}{d}$ prototiles, the constraints of Corollary \ref{cor:smooth} are effective in the case $n\le 2d$.
\par
It is instructive to consider in details the case $n=d+1$. In this situation, the generating set of $d\mbox{-cycles}$ of $\mathcal{G}_V$ contains only one element. Therefore, the partial amplitudes $a_k$ at $d+1$ decorating points must belong to a one-dimensional subspace of $\C^{d+1}$, depending smoothly on $k$. Let us denote by $x_p$ the decorating point belonging to the prototile {\em not} having $v_p$ as its edge (see Figure \ref{fig:canonical_2}). A straightforward computation (using the approach presented at the end of Section \ref{sec:subspace}) then leads to the following formula for the partial amplitudes: 
\begin{equation}
\nonumber
a_k(x_p) = A(k) \sin(\pi k \cdot v_p),
\end{equation}
where the coefficient $A(k)$ depends on the tiling under consideration and is clearly not a regular function of $k$. 
\subsection{Decorated square-triangle tiling}
The square-triangle tiling is a popular model for the structure of planar aperiodic systems with twelve-fold symmetry. It describes remarkably well the quasiperiodic order observed in molecular dynamics simulation \cite{leung1989dodecagonal}, and has also attracted attention recently in soft mater physics \cite{iacovella2011self}. Square-triangle tilings are commonly decorated at vertices, but we shall consider  decorations positioned at the center of tiles instead (see Figure \ref{fig:tiles_st}).
\begin{figure}[h]
	\centering
	\includegraphics[width=0.8\linewidth]{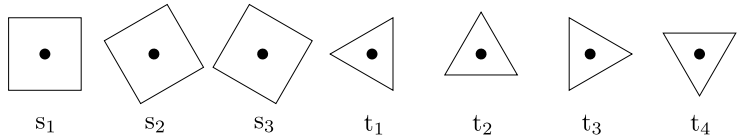}
	\caption{Seven prototiles of the square-triangle tiling. Solid circles represent decorations placed at the centers of prototiles.}
	\label{fig:tiles_st}
\end{figure}
\par
We shall impose no additional constraint on the tiling beyond the assumption that the pure-point part of its diffraction spectrum is dense in $E^*$ (a lot of examples of such tilings can be obtained by some inflation procedure, see e.g. \cite{gahler1997diffraction}). The local order in the tiling is thus described by an FBS-complex obtained by gluing together the (triangulated) prototiles of Figure \ref{fig:tiles_st}. The group $L_V$ is generated by the edges of the tiling and is
a free abelian subgroup  of $E$ of rank 4. The computation performed with the computer algebra system Nemo \cite{Nemo} shows that the null space of the boundary operator of $\mathcal{G}_V$ in degree 2 has rank 2 (over the quotient field of $\Z[L_V]$). Fortunately, it is possible to visualize the generating $2\mbox{-cycles}$ using the approach given at the end of Section \ref{sec:subspace} (see Figure \ref{fig:cycles_st}). 
\begin{figure}[h]
	\centering
	\includegraphics[width=0.6\linewidth]{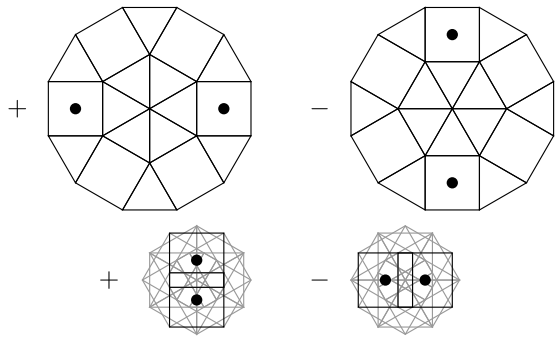}
	\caption{A metaphorical representation of two $2\mbox{-cycles}$ of $\Z^{(\widetilde{B}_V)}$ for the square-triangle tiling (see also Figure \ref{fig:flip}). The solid circle represents the decoration of the tile $\mathrm{s}_1$. Top: the first cycle is a formal difference of two dodecagonal patches of tiling. Bottom: the tiles entering with the same sign in the second cycle are overlapping, but their effective boundary has 12-fold symmetry. The tiles of type $\mathrm{s}_1$ are outlined.}
	\label{fig:cycles_st}
\end{figure}
\par
To give explicit formulas for the constraints on the partial diffraction amplitudes, we shall use the following notation for the twelve vectors corresponding to the edges of the tiling (assumed to be of unit length):
\begin{equation}
\nonumber
l_i=
\begin{pmatrix}
\cos(\pi(i-1)/6)\\
\sin(\pi(i-1)/6)
\end{pmatrix}\qquad\text{ for }i \in \Z/12\Z
\end{equation}
We shall denote the points on $|B|$ corresponding to the decoration of the square $\mathrm{s}_p$ and the triangle $\mathrm{t}_q$ by $x_{\mathrm{s}_p}$ and $x_{\mathrm{t}_q}$ respectively. To obtain the constraints on the partial amplitude at a given decorating point, one has to find all copies of the corresponding tiles on Figure \ref{fig:cycles_st} and take into account their weights and the positions of decorations, as explained at the end of Section \ref{sec:subspace}. This yields the following formulas:
\begin{multline}
\label{eq:squares}
a_k(x_{\mathrm{s}_p}) = 
  A_1(k) \sum_{i=0}^3 (-1)^i \exp\left(\ii\pi (\sqrt{3}+1)k \cdot l_{3i+p}\right)+\\
  A_2(k) \sum_{i=0}^3 (-1)^{i+1} \exp\left(\ii\pi (\sqrt{3}-1)k \cdot l_{3i+p}\right)
\end{multline}
and
\begin{multline}
\label{eq:triangles}
a_k(x_{\mathrm{t}_q}) =\\ 
A_1(k) \sum_{i=0}^2 \left(
    \exp\left(-2\pi\ii\frac{\sqrt{3}}{3}k\cdot l_{4i+q} \right)-
	\exp\left(-2\pi\ii\frac{1+\sqrt{3}}{3}k\cdot l_{4i+q} \right)
\right)+\\
A_2(k) \sum_{i=0}^2 \left(
\exp\left(-2\pi\ii\frac{\sqrt{3}}{3}k\cdot l_{4i+q} \right)-
\exp\left(-2\pi\ii\frac{\sqrt{3}-1}{3}k\cdot l_{4i+q} \right)
\right),
\end{multline}
where the complex-valued functions $A_1(k)$ and $A_2(k)$ depend on the considered tiling. Therefore, for almost all Bragg peaks of a square-triangle tiling, with the possible exception of a set nowhere dense, the seven partial amplitudes (\ref{eq:squares}) and (\ref{eq:triangles}) depend on only two unknown quantities!
\section{Conclusions and discussion}
We have considered the pure-point part of the diffraction spectrum of the families of Delone point patterns in the Euclidean space $E$, obeying local rules in a wide sense of the term (in particular, including disordered systems such as models of decorated random tilings). The partial diffraction amplitudes of such patterns are constrained by linear equations explicitly derivable from the local rules. More specifically, these equations depend on the properties of the corresponding FBS-complex -- a geometric object encoding the local order of the pattern (see Definition \ref{FBS}). Whenever Bragg peaks fill densely a linear subspace $V \subset E$, for almost all of them, with the possible exception of a subset nowhere dense in $V$, the coefficients of these equations depend smoothly on the wave vector $k \in V$. For a given FBS-complex, these coefficients can be calculated explicitly in terms of finite trigonometric sums.   
\par
It has been argued in \cite{kalugin2019robust} that the goal of the structure analysis of aperiodic solids should be the determination of the local environments responsible for the formation of the long range order, rather than finding the position of each and every atom in the structure. The local environments are naturally described by decorated FBS-complexes, and the constraints on partial diffraction amplitudes could be used to evaluate the validity of such structure models. This brings up a question: are partial amplitudes $a_k$ experimentally measurable? Since formally $a_k$ can be derived (up to a common phase factor) through the dependence of the Bragg peak intensities (\ref{eq:peak}) on the weights $w_p$, one can think of using the method of isotopic substitution in neutron diffraction experiments \cite{cornier1993neutron}. However, this approach does not allow to distinguish the contributions of the same chemical element in different local environments. A alternative way to access the partial amplitudes is made possible by the recent progress in the development of phasing algorithms \cite{palatinus2013charge}. Namely, one could separate the contribution of different atomic sites to the diffraction via the segmentation of the reconstructed electronic density (see e.g. Figure 2 of \cite{takakura2007atomic}). Technically, such a segmentation can be performed by means of a watershed algorithm \cite{beucher2018morphological}.  
\par
Corollary \ref{cor:smooth} provides a way to prove for a given set of local rules that the pure point part of the diffraction measure of any decorated tiling respecting these rules cannot be dense everywhere. The case of local rules enforcing periodic tilings is a trivial example of such a situation. An interesting open question is whether there are less trivial tilings for which the absence of everywhere dense pure point diffraction can be proven in this way.
\section*{Appendix A}
Let $\mathbf{\Delta}_\mathrm{inj}$ stand for the small category of finite ordered sets $[n]:=\{0<1<\dots < n\}$ and order preserving injective maps, and $\mathbf{\Delta}_\mathrm{inj}^\mathrm{op}$ be the corresponding opposite category. 
\begin{definition}
A semi-simplicial\footnote{This construction is also called a presimplicial object \cite[Chapter 4.1]{fritsch1990cellular}. We follow here the terminology of \cite[Chapter 8.1]{weibel1995introduction}.} object in a category $\mathcal{C}$ is a functor $\mathscr{S}: \mathbf{\Delta}_\mathrm{inj}^\mathrm{op} \to \mathcal{C}$ (or equivalently a {\em contravariant} functor $\mathbf{\Delta}_\mathrm{inj} \to \mathcal{C}$).
\end{definition}
The functor $\mathscr{S}$ is entirely characterized by its value on objects and morphisms of $\mathbf{\Delta}_\mathrm{inj}^\mathrm{op}$. Thus, the first part of the data defining a semi-simplicial object $\mathscr{S}$ is a sequence $\{\mathscr{S}_n, n \in \N\}$ of objects of $\mathcal{C}$, where we use the notation $\mathscr{S}_n$ for the functorial image $\mathscr{S}[n]$. Let $\delta^{n,i}$ stand for the injective map from $[n-1]$ to $[n]$ missing the element $i \in [n]$. The entire set of morphisms of $\mathbf{\Delta}_\mathrm{inj}$ is a transitive closure of $\delta^{n,i}$ (called elementary {\em coface maps}). Let $\delta_{n,i}:=\mathscr{S}\delta^{n,i}$ stand for the functorial image of $\delta^{n,i}$. Therefore, the second part of the data defining $\mathscr{S}$ is a set of the elementary {\em face morphisms} $\delta_{n,i}: \mathscr{S}_n \to \mathscr{S}_{n-1}$. These morphisms must satisfy the so-called simplicial identity:
\begin{equation}
\label{face_identity}
\delta_{n-1,i} \delta_{n,j} =\delta_{n-1,j-1} \delta_{n, i} \quad \text{ whenever } i<j,
\end{equation}
which follows directly from the identity $\delta^{n,j} \delta^{n-1,i} =\delta^{n,i} \delta^{n-1, j-1}$ in the category $\mathbf{\Delta}_\mathrm{inj}$.
\par
In the case where $\mathcal{C}$ is the category of sets and maps, semi-simplicial objects are called semi-simplicial sets. If $B$ is a semi-simplicial set, the disjoint union of the sets $B_n$ has naturally the structure of an $\N\mbox{-graded}$ set, which we will denote by $B_\bullet$. The semi-simplicial set $B$ is called $d\mbox{-dimensional}$ if $B_d \neq \varnothing$ and $B_n=\varnothing$ for all $n>d$. A finite-dimensional semi-simplicial set is finite if all sets $B_n$ are finite. For a semi-simplicial set $B$ we shall denote by $|B|$ its geometric realization (that is the topological cellular complex with the combinatorial structure given by $B$, see e.g. \cite{milnor1957geometric} and \cite[Chapter~4.2]{fritsch1990cellular}). Similarly, for $s \in B_n$, the notation $|s|$ refers to the corresponding $n\mbox{-dimensional}$ simplicial cell $|s| \subset |B|$. 
\par
Let $e^{n,i}: [1] \to [n]$ stand for the morphism in $\mathbf{\Delta}_\mathrm{inj}$ given by
\begin{eqnarray*}
	e^{n,i}(0)=0\\
	e^{n,i}(1)=i.
\end{eqnarray*}
We define the {\em reference edge} maps $e_{n,i}: B_n \to B_1$ as the functorial image of $e^{n,i}$:
\begin{equation*}
e_{n,i} := Be^{n,i}
\end{equation*}
\par
Another important case of semi-simplicial objects corresponds to the situation when $\mathcal{C}$ is the category of modules over a commutative ring $R$. Such semi-simplicial objects are called semi-simplicial $R\mbox{-modules}$ (or semi-simplicial vector spaces if $R$ is a field). If $\mathcal{M}$ is a semi-simplicial $R\mbox{-module}$, the face operators $\delta_{n,i}$ are $R\mbox{-module}$ homomorphisms:
$$
\delta_{n,i}: \mathcal{M}_n \to \mathcal{M}_{n-1}.
$$
As follows from (\ref{face_identity}), the operators $\partial_n: \mathcal{M}_n \to \mathcal{M}_{n-1}$ defined as
$$
\partial_n = \sum_{i=0}^n (-1)^i \delta_{n,i}
$$
satisfy the equation $\partial_{n-1}\partial_n=0$ and thus make the $\N\mbox{-graded}$ module
$$
\mathcal{M}_\bullet = \bigoplus_{n \in \N} \mathcal{M}_n 
$$ 
into a chain complex $(\mathcal{M}_\bullet, \partial_\bullet)$. 
\par
Given a commutative ring $R$ and a semi-simplicial set $B$, one can construct the free semi-simplicial $R\mbox{-module}$ $R^{(B)}$ by postcomposing $B$ with the free $R\mbox{-module}$ functor $R^{(-)}$. It is noteworthy that the chain complex $({\Z^{(B)}}_\bullet, \partial_\bullet)$ is tautologically isomorphic to the complex of cellular chains of $|B|$ considered as a CW-complex. For this reason we shall use for $({\Z^{(B)}}_\bullet, \partial_\bullet)$  the more traditional notation $C_\bullet(B, \Z)$.
\section*{Appendix B}
Traditionally, the diffraction measure is defined as a Fourier transform of the autocorrelation (or Patterson) measure \cite{baake2004dynamical,baake2013aperiodic} of the diffracting quantity. In this Appendix, we shall show that the distribution $\eta$ defined by the formula (\ref{eta}) equals the Fourier transform of the autocorrelation measure of (\ref{varrho}). 
\par
Let us start by constructing the autocorrelation measure $\gamma$ of the weighted Dirac comb $\varrho_f$ in the sense of the dynamical system $(\X(f_0), E, \mu)$. For a given $f \in \X(f_0)$, let us consider the product measure $\overline{\varrho}_f \times \varrho_f$ on $E^2$. Averaging this measure over the hull $\X(f_0)$ yields a positive translation bounded measure on $E^2$
\begin{equation}
\nonumber
\int_{\X(f_0)} \left(\overline{\varrho}_f \times \varrho_f\right) \dd \mu(f),
\end{equation}
which is invariant with respect to translations of the form $(y_1, y_2) \mapsto (y_1+t, y_2 +t)$. Therefore, there exists a positive translation bounded measure $\gamma$ on $E$ such that for any $\psi \in \mathcal{S}(E^2)$ holds the following
\begin{multline}
\label{gamma}
\int_{E^2}\left(
\int_{\X(f_0)}\overline{\varrho}_f(y_1) \varrho_f(y_2) \dd \mu(f)
\right)\psi(y_1, y_2)\dd y_1 \dd y_2 =\\
\int_E\left(
\int_E \psi(t, y+t) \dd t
\right)\dd \gamma(y)
\end{multline}
It can be shown following the Dworkin's argument \cite{dworkin1993spectral} (see also \cite{baake2016spectral} for a detailed account) that if the dynamical system $(\X(f_0), E, \mu)$ is uniquely ergodic, then the naïve autocorrelation measure of $\varrho_f$ exists and is equal to $\gamma$. The former is defined (see for instance \cite{baake2013aperiodic}) as the Eberlein convolution $\widetilde{\varrho}_f \circledast \varrho_f$, where the symbol $\widetilde{\enskip}$ stands for complex conjugation and changing the sign of the function argument, i.e. $\widetilde{\varrho}_f(y)=\overline{\varrho}_f(-y)$. 
\par
Let us now express the left hand side of (\ref{eta}) in terms of $\gamma$. The right-hand side of (\ref{Gamma}) is a square-integrable function with well-defined values everywhere on $\X(f_0)$ (and not just $\mu\mbox{-almost}$ everywhere). Therefore, taking into account (\ref{varrho}), for any $f \in \X(f_0)$ and any $\varphi_1, \varphi_2 \in \mathcal{S}(E)$ one has
\begin{equation}
\nonumber
\sum_{p,q=1}^m\overline{w_p}w_q\left(\overline{\Gamma_{x_p}(\varphi_1)}\Gamma_{x_q}(\varphi_2)\right)(f) =
\int_{E^2} \overline{\varrho}_f(y_1) \varrho_f(y_2) \overline{\varphi_1}(-y_1) \varphi_2(-y_2)
\dd y_1 \dd y_2
\end{equation}
By integrating this identity over the hull $\X(f_0)$ and taking into account (\ref{gamma}), we get
\begin{equation}
\nonumber
\sum_{p,q = 1}^m
\overline{w_p} w_q \left\langle 
\Gamma_{x_p}(\varphi_1), \Gamma_{x_q}(\varphi_2)
\right\rangle =
\int_E (\widetilde{\varphi}_1 \ast \varphi_2)(-y) \dd \gamma(y)
\end{equation}
Using (\ref{eta}) for the left-hand side and expressing the right-hand side through the Fourier transform of $\gamma$ yields
\begin{equation}
\nonumber
\int_{E^*} \overline{\widehat{\varphi_1}}(k) \widehat{\varphi_2}(k) d\eta(k) 
\equiv \eta(\overline{\widehat{\varphi_1}}\widehat{\varphi_2})=
\widehat{\gamma}(\overline{\widehat{\varphi_1}}\widehat{\varphi_2})
\end{equation}
Therefore, since the functions of the form $\overline{\widehat{\varphi_1}} \widehat{\varphi_2}$ are dense in $\mathcal{S}(E^*)$, we have the identity
\begin{equation}
\nonumber
\eta = \widehat{\gamma}.
\end{equation}
\section*{Acknowledgments}
P.K. thanks Marat Rovinski for fruitful discussions.
\bibliographystyle{unsrt}
\bibliography{draft_rational}
\end{document}